\newtheorem{theorem}{Theorem}
\newtheorem{lemma}[theorem]{Lemma}
\newtheorem{proposition}[theorem]{Proposition}
\newtheorem{rem}{Remark}
\newtheorem{example}{Example}
\newtheorem{ass}{Assumption}
\newtheorem{definition}{Definition}
\newenvironment{proofsketch}{\noindent\textit{Proof sketch.}\ }{\hfill\text{\footnotesize$\blacksquare$}}
\newcommand{\bmat}[1]{\begin{bmatrix} #1\end{bmatrix}}
\newcommand{\Lc}{\mathcal{L}}
\newcommand{\RN}{\mathbb{R}^N}
\newcommand{\R}{\mathbb{R}}
\newcommand{\Nc}{\mathcal{N}}
\newcommand{\ts}[1]{\textsuperscript{#1}}
\newcommand{\1}{\mathds{1}}
\newcommand{\dt}{\frac{\mathrm{d}}{\mathrm{d}t}}
\newcommand{\sat}{\mathrm{sat}}
\newcommand{\dtk}[1]{\frac{\mathrm{d}^{#1}}{\mathrm{d}t^{#1}}}
\newcommand{\vertiii}[1]{{\left\vert\kern-0.25ex\left\vert\kern-0.25ex\left\vert #1 
    \right\vert\kern-0.25ex\right\vert\kern-0.25ex\right\vert}}
\newcommand{\todo}[1]{  \ifthenelse{\boolean{showcomments}}
{\textcolor{ForestGreen}{TO DO:  #1}}{}}
\newcommand{\suggest}[1]{\ifthenelse{\boolean{showcomments}}
{\textcolor{Orange}{(Suggestion: #1)}}{}}
\newcommand{\alain}[1]{\ifthenelse{\boolean{showcomments}}
{\textcolor{Blue}{(Alain says: #1)}}{}}
\newcommand{\jonas}[1]{\ifthenelse{\boolean{showcomments}}
{\textcolor{ForestGreen}{(Jonas says: #1)}}{}}
\newcommand{\kristian}[1]{\ifthenelse{\boolean{showcomments}}
{\textcolor{Blue}{(Kristian says: #1)}}{}}
\newcommand{\emma}[1]{\ifthenelse{\boolean{showcomments}}
{\textcolor{VioletRed}{(Emma says: #1)}}{}}
\newcommand{\ifneeded}[1]{\ifthenelse{\boolean{showcomments}}
{\textcolor{Gray}{#1}}{}}
\newcommand{\edit}[1]{\ifthenelse{\boolean{showedit}}
{\textcolor{Blue}{#1}}{}}
\newcommand{\etedit}[1]{\ifthenelse{\boolean{showedit}}
{\textcolor{Red}{#1}}{}}
\newcommand{\draft}[1]{\ifthenelse{\boolean{showedit}}
{\textcolor{gray}{#1}}{}}
\newcommand{\mc}{\mathcal}
\pgfplotsset{compat=newest}
\title{\LARGE \bf Compositional design for time-varying and nonlinear coordination
}
\author{{Jonas Hansson and Emma Tegling} 
 \thanks{The authors are with the Department of Automatic Control and the ELLIIT Strategic Research Area at Lund University, Lund, Sweden. Email: \{\tt\small{jonas.hansson, emma.tegling}\}@control.lth.se}
        \thanks{This work was partially funded by Wallenberg AI, Autonomous Systems and Software Program (WASP) funded by the Knut and Alice Wallenberg Foundation. }}
\begin{document}
\maketitle
\begin{abstract}
    This work addresses the design of multi-agent coordination through high-order consensus protocols. While first-order consensus strategies are well-studied---with known robustness to uncertainties such as time delays, time-varying weights, and nonlinearities like saturations---the theoretical guarantees for high-order consensus are comparatively limited.
    We propose a compositional control framework that generates high-order consensus protocols by serially connecting stable first-order consensus operators. Under mild assumptions, we establish that the resulting high-order system inherits stability properties from its components. The proposed design is versatile and supports a wide range of real-world constraints. This is demonstrated through applications inspired by vehicular formation control, including protocols with time-varying weights, bounded time-varying delays, and saturated inputs. We derive theoretical guarantees for these settings using the proposed compositional approach and demonstrate the advantages gained compared to conventional protocols in simulations.
\end{abstract}

\section{Introduction}
Multi-agent coordination is one of the central problems in networked and distributed control. Consensus-seeking in opinions was modeled early on in~\cite{degroot1974reaching} in a discrete-time setting, while \cite{LevineAthans1966, Melzer1971, chu1974decentralized_control, chu1974optimaldecentralized} dealt with the coordination of vehicle strings. The problem was revisited in the early 2000's where a significant research thrust was initiated after some seminal works~\cite{jour:FaxMurray2004,olfati2004consensusdelays,olfati2006flocking, spanos2005Dynamicconsens, moreau2005TACstabtimedependent, jour:Jadbabaie2003,ren2005consensuschangingtopology}. These established many of the fundamental properties of second- and first-order consensus protocols. Based on these works, we know that the first-order consensus protocol $\dot{x}=-L(t)x$, where $L(\cdot)$ is a time-varying graph Laplacian that encodes relative feedback, is robust to delays and time-varying topology. Furthermore, that consensus has a wide range of applications, ranging from swarming robots, vehicle platoons, frequency synchronization, and describing natural flocking behaviors \cite{olfati2007consensuscoop}.

In this work, we will consider coordination among higher-order agents. If the first-order consensus protocol is $\dot{x}=-Lx$, the second-order is $\ddot{x}=-L_\mathrm{vel}\dot{x}-L_\mathrm{pos}x$, then there is a natural generalization to a general $n\ts{th}$-order consensus protocol, which is
$x^{(n)}=-L_\mathrm{(n-1)}x^{(n-1)}-\dots-L_{(0)}x.$
Here, the goal is to coordinate in position, velocity, and the high-order derivatives. This high-order consensus problem was first considered in \cite{ren2006highorder,jour:Ren2007}, where some sufficient conditions for system stability were also established.

Our motivation for revisiting this problem comes from studies of the dynamic behaviors of, in particular, large-scale coordinating multi-agent networks. Vehicular platoons suffering from error propagation and string stability challenges~\cite{jour:Swaroop2006, jour:studli2017StringConcepts}, fundamental limitations in terms of formation coherence~\cite{jour:Bamieh2008, jour:Bamieh2012Sep}, and lately scale fragilities~\cite{jour:TEGLING2023, jour:TEGLINGMiddleton2019} in second- and higher-order consensus. Here, poor dynamical behaviors---or even instability---can emerge when networks grow large in a manner that is hard to foresee from the original, distributed, control design. This calls for methods to construct coordination protocols that allow for a modular and scalable network design. An early and influential such approach based on passivity was~\cite{arcak2007passivitycoord}; here, we take an alternative route. 

Apart from challenges related to large-scale dynamic behaviors, even the problem of stabilization is non-trivial in higher-order coordination, and more difficult than in first- and second-order protocols. For instance, the first-order protocol permits time-varying topologies with certain time-delayed measurements \cite{jour:Lu2017consensusdelays}, saturations \cite{jour:LiConsensusSaturation}, and directed topologies provided the network is sufficiently connected over time. For the second-order linear and time-invariant consensus protocol, stability can be guaranteed, provided that $L_\mathrm{vel}=r_\mathrm{vel}L$ and $L_\mathrm{pos}=r_\mathrm{pos}L$, where $L$ is symmetric and contains a spanning tree. When the symmetry condition is broken, as in the case of a directed cycle graph, then the second-order linear consensus protocol can become unstable \cite{jour:Studli2017}. There are also many works that have derived sufficient conditions for this protocol when subject to time-delays \cite{Gao2023secondorderdelay}, time-varying structures \cite{li2022secondvaryinganddelays}, and various nonlinearities \cite{lyu2016consensusinputsaturation}. However, these results most often depend on global or absolute knowledge of the positions and velocities. Higher-order protocols have been shown to lack scalable stability in sparse networks~\cite{jour:TEGLING2023}, meaning that a loss of closed-loop stability is inevitable without a {(re-)}~tuning based on global knowledge.  Other works on high-order coordination include~\cite{trinidade2014highorderlqr} that studied LQR, \cite{li2024secondvaryinganddelays, TIAN201212highorderdelays} time-varying topology and delays, and \cite{Rezaee2015highorderforaverageconsensus,he2011, Liu2010hinfcontrolabsfeedback}, where consensus is achieved, but with the help of absolute feedback.

In this work, we propose a novel consensus protocol for achieving coordination in a network of $n\ts{th}$-order integrators. Our proposed control design is based on the idea of first designing the closed-loop system and then identifying the corresponding control law. The class of desired closed-loop systems can be written as the composition of $n$ simple first-order consensus systems, that is,
$$\left(\dtk{} + \Lc_n\right)\circ \cdots \circ\left(\dtk{} +\Lc_1\right)(x)=0,$$ where each $\Lc$ describes a, potentially nonlinear and time-varying operator that generalizes the graph Laplacian in the LTI case. 
Due to its compositional structure, we will call this the \emph{compositional consensus} system. In the second-order case, this can be expanded to
$$\ddot{x}=u(x,t)=-\Lc_2(\dot{x}+\Lc_1(x,t),t)-\dtk{}\Lc_1(x,t).$$
Under relatively mild conditions on the operators $\Lc_k$, essentially that their corresponding first-order protocols achieve consensus, we can show that this control design guarantees that the solution $x$, $\dot{x}$ and all its first $n-1$ derivatives will coordinate and reach an $n\ts{th}$-order consensus. We note that this is independent of the underlying graph structure.  

To illustrate the strength of the compositional consensus, we also demonstrate how to apply this controller when the composing components $\Lc_k$ correspond to saturated, time-varying, and time-delayed consensus protocols, building on and partially extending results existing in literature. In particular, we prove a general result on the stability of consensus under saturated control inputs. We formally and through case studies show that compositional consensus has superior stability and performance than a more na\"ive higher-order protocol. The implementation of the protocol remains localized, but requires some additional signaling in an $n-$hop neighborhood, or message-passing between nearest neighbors. Due to the strong robustness towards time-delays and time-varying connectivity, such signaling need not be ideally implemented. 

% \emma{Later, we can revise this intro and connect to more literature on composition/interconnection eg paper by Lestas and vinnicombe. }

\paragraph*{Paper Outline}
The remainder of the paper is organized as follows. We next introduce some notation and preliminaries, followed by an introduction of compositional consensus in~Section~\ref{sec:problem}. Section~\ref{sec:main} presents our main result along with key lemmas used in the proof. Section~\ref{sec:case} studies some selected first-order consensus protocols that can be used in the compositional design. In Section~\ref{sec:applications} we illustrate our result through numerical simulations, and Section~\ref{sec:conclusions} concludes the paper.

\subsection{Mathematical preliminaries}
\paragraph*{Graph theory}
We represent a directed graph as \( \mathcal{G} = (\mathcal{V}, \mathcal{E}) \), where \( \mathcal{V} = \{1, \dots, N\} \) is the set of nodes, and \( \mathcal{E} \subset \mathcal{V} \times \mathcal{V} \) is the set of edges. The graph is associated with a weighted adjacency matrix \( W \in \mathbb{R}^{N \times N} \), where \( W_{i,j} > 0 \) if and only if \( (j, i) \in \mathcal{E} \), i.e., agent \( j \) influences agent \( i \).
The corresponding graph Laplacian is defined as
\[
L = D - W,
\]
where \( D \) is the diagonal degree matrix with \( D_{i,i} = \sum_{j=1}^N W_{i,j} \). The graph Laplacian \( L \) has zero row sum and encodes the relative feedback structure of the network.

A graph is said to contain a \emph{directed spanning tree} if there exists a node \( k \in \mathcal{V} \) such that all other nodes \( j \in \mathcal{V} \setminus \{k\} \) are reachable via a directed path from \( k \). If this condition holds, the Laplacian \( L \) has a simple zero eigenvalue, and all other eigenvalues have strictly positive real parts.

We also make use of \emph{\( \delta \)-graphs} as defined in~\cite{Moreau2004stabconsensus}. Given a threshold \( \delta > 0 \), the \( \delta \)-graph associated with \( W \) is a subgraph where an edge \( (j, i) \) is retained if and only if \( W_{i,j} \geq \delta \).

\paragraph*{Norms and other notation}
We denote vector and matrix norms using \( \|\cdot\| \). For vectors \( x \in \mathbb{C}^N \), we use the \( \infty \)-norm
\[
\|x\|_\infty = \max_i |x_i|,
\]
and for matrices \( C \in \mathbb{C}^{N \times N} \), the induced matrix norm
\[
\|C\|_\infty = \max_i \sum_j |C_{ij}|.
\]
Seminorms are denoted \( \vertiii{\cdot} \), following the notation in~\cite{bullo2024-CTDS}. These are functions satisfying the triangle inequality \(\vertiii{x_1+x_2}\leq \vertiii{x_1}+\vertiii{x_2}\) and absolute homogeneity $\vertiii{a x}\leq |a|\vertiii{x}$. When the context is clear, we drop the explicit time-dependence \( x(t) \) in the notation. We write \( \dtk{j} x = x^{(j)} \) for the \( j \)\ts{th} time derivative, and use \( \partial_t F(x,t) \) for partial derivatives.

Function composition is written \( (f \circ g)(x) = f(g(x)) \). When composing time-varying functions, we use the convention
\[
(\Lc_2 \circ \Lc_1)(x) := \Lc_2(\Lc_1(x,t), t).
\]
A continuous function $\gamma(\cdot)$ is said to be of class~$\mathcal{K}$ if it is non-negative and strictly increasing. A continuous function $\beta(\cdot,\cdot)$ belongs to class $\mathcal{K}\mathcal{L}$ if it for any fixed $s$, $\beta(\cdot,s)$ is of class~$\mathcal{K}$, and for any fixed $r$, $\beta(r,\cdot)$ is decreasing with respect to $s$ and satisfy $\lim_{s\to\infty}\beta(r,s)=0$. This follows the standard notation of \cite{jour:khalil2002nonlinear}.

\subsection{Consensus}
Due to the presumed lack of absolute feedback (see Assumption~\ref{ass:kisinvariant}), the relevant notion of stability in this work is instead one of consensus among the agents. It is defined as follows. 
\begin{definition}[Consensus]\label{def:consensus}
Let \( x(t) \in \mathbb{R}^N \) be the state of a multi-agent system governed by
$\dot{x} = f(x,t).$
The system is said to achieve \emph{consensus} if
\[
\lim_{t \to \infty} |x_i(t) - x_j(t)| = 0, \quad \text{for all } i \neq j.
\]
\end{definition} \noindent 
It is well known that the simple, linear, continuous-time consensus protocol 
\[
\dot{x} = -Lx,
\]
achieves consensus, where \( L \) is a graph Laplacian, provided the underlying graph contains a directed spanning tree \cite{jour:Zhiyun2005directedconsensus,jour:Ren2005digraphproof}.
In high-order coordination problems, synchronizing the positions and higher-order derivatives, such as velocities and accelerations, is often desirable. This motivates the following generalization (see also \cite{ren2006highorder}):

\begin{definition}[\( n\ts{th} \)-Order consensus]
Let \( x(t) \in \mathbb{R}^N \) evolve according to
\[
\frac{\mathrm{d}^n x}{\mathrm{d}t^n} = f(x,t).
\]
The solution $x$ is said to achieve \( n\ts{th} \)-order consensus if
\[
\lim_{t \to \infty} |x_i^{(k)}(t) - x_j^{(k)}(t)| = 0,\, \forall i \neq j \text{ and } k = 0, \dots, n-1.
\]
\end{definition}
\noindent This definition captures the idea that all agents eventually align in their positions and higher-order dynamics, like velocities and accelerations.

\section{Problem Setup}\label{sec:problem}
In this work we consider a network consisting of $N$ identical agents with $n\ts{th}$-order integrator dynamics, that is,
\begin{equation}
\label{eq:nthorderintegrator}
    x^{(n)}(t)=u(x,t),
\end{equation}
$x(0) = x_0, ~\dot{x}(0) = \dot{x}_0,\ldots,~x^{(n)}(0) = x^{(n)}_0$. Our proposed control design can be compared with a Youla-Kucera parametrization, where we first design the closed-loop system to be
\begin{equation}
   \left(\dt+ \Lc_n\right)\circ \cdots\circ \left(\dt+ \Lc_2\right)\circ \left(\dt+ \Lc_1\right) \!(x)\!=\!0,%\ur(t),
    \label{eq:nonlinear_serial}
\end{equation}
where each operator $\Lc_k(\cdot,t): \RN \to \RN$ is allowed to be time-dependent and potentially nonlinear. Our controller is then chosen to be the one that achieves this closed-loop system, that is,
\begin{equation}\label{eq:controller_nthorder}
    u(x,t)=x^{(n)}(t)-  \left(\dt+ \Lc_n\right)\circ \cdots\circ \left(\dt+ \Lc_1\right) \!(x).
\end{equation}
The closed-loop design matches the behavior of $n$ dynamical systems in a series interconnection. The system can be analyzed in the following simple state-space form
\begin{equation}\label{eq:statespace}
    \bmat{\dot{\xi}_1\\ \vdots \\\dot{\xi}_n}=\bmat{ -\Lc_1(\xi_1,t)+\xi_2\\ \vdots \\ -\Lc_n(\xi_n,t)}.
\end{equation}
    where $\xi_1=x$ and $\xi_{k+1}=\dot{\xi}_{k}+\Lc_k(\xi_k,t)$. We want to emphasize that the state-space formulation~\eqref{eq:statespace} is the key to the scalability of the compositional consensus formulation. The series interconnection of dynamical systems has some favorable properties. For instance, any series interconnection of strongly contracting systems will itself be strongly infinitesimally contracting \cite[Theorem~3.23]{bullo2024-CTDS}. We also want to highlight the connection to the literature on distributed optimization, using gradient tracking \cite{CarnevaleGradTrack, Dhullipalla2024GradTrack}, and also used for dynamic average consensus \cite{kia2019tutorialdynamic, AldanaLopez2022DynamicConsensus}. One key difference between the mentioned work and ours is that we will mainly focus on the scenario where $\Lc_k$ are restricted to using only \textit{relative feedback}. A limitation to relative feedback poses severe challenges in coordination control design, see e.g.~\cite{jour:Bamieh2012Sep}, and is motivated by a fundamental difficulty in many applications to capture absolute position, phase, etc., whereas the corresponding relative measurement is readily available. The following Assumption captures this limitation.
\begin{ass}[Relative feedback]\label{ass:kisinvariant} 
The feedback operators satisfy  $\Lc_k(z(t)+\1 a(t),t)=\Lc_k(z(t),t)$ for any $z$, $a(\cdot)$, and~$t$. 
\end{ass}
Under suitable and relatively simple conditions, it is possible to show that the solution~$x(t)$ of the closed-loop system~\eqref{eq:nonlinear_serial} will converge to an $n\ts{th}$-order consensus. Furthermore, that sparsity of the individual operators $\Lc_k$ implies sparsity of the controller $u(x,t)$, as defined by \eqref{eq:controller_nthorder}. 

\begin{example}\label{example:serial_consensus}
%\emma{Serial consensus (LTI). With picture?}
In the linear, time-invariant case, the composition~\eqref{eq:nonlinear_serial} may capture the \emph{serial consensus} protocol. Here, the closed loop matches regular consensus protocols connected in a series. 
%This work builds on the idea of serial consensus where the closed-loop system is designed to match the behavior of regular consensus protocols put in a series interconnection. 
The $n\ts{th}$-order serial consensus system can, in the Laplace domain, be represented as 
$$(sI+L_n)\cdots(sI+L_2)(sI+L_1)X(s)=U_\mathrm{ref}(s).$$
One property that makes this system interesting is the simple condition for stability. That is, if each of the graphs underlying the $L_k$'s contain a directed spanning tree, then this high-order consensus protocol will achieve an $n\ts{th}$-order consensus, assuming a decaying input signal $\|u_\mathrm{ref}(t)\|\to 0$ \cite{jour:hansson2023scalable}. The serial consensus protocol can also be used to construct linear time-invariant consensus protocols for vehicular formations with a strong notion of scalable performance \cite{jour:hansson2025tcns ,jour:hansson2024lcss}. It, therefore, avoids issues with scale fragility~\cite{jour:TEGLING2023} and string instability~\cite{jour:studli2017StringConcepts} affecting conventional consensus protocols. 

To implement serial consensus, additional signaling may be needed in the multi-agent system. This can be seen in the second-order serial consensus where the control law is
$$u(x,t)= -(L_1+L_2)\dot{x}-L_2L_1x.$$
Here, the velocity feedback can be realized immediately through local measurements. For the second term, each agent can first aid in calculating $e=L_1x$, then message pass this measurement to their followers so that they can compute the relative differences~$L_2 e$. In general, it is possible to compute the local control law for the $n\ts{th}$-order serial consensus through the use of $n-1$ local message passes; the local consensus protocol is only dependent on relative measurements within an $n$-hop neighborhood (at most) of each agent. 
\end{example}

\subsection{Assumptions}
\label{sec:assumptions}
To prove the main result, we will use the following assumptions. The first ones cover the operators $\Lc_k$ for $k\leq n-1$. 
First, we impose a standard technical assumption used to establish the existence and uniqueness of a solution.
\begin{ass}\label{ass:kislipschitz}
    The $\Lc_k(z,t)$ is Lipschitz in $z$ with a global Lipschitz constant independent of $t$ and are, for any fixed $z$, piecewise continuous in $t$.  
\end{ass}
The next assumption is one of input-to-state stability (ISS) for the individual subsystems in the composition.    
\begin{ass}\label{ass:kislocallyISS}
    If $\|w(t)\|\leq M_k$ for all $t\geq T_0$, then the system $\dot{z}=\Lc_k(z,t)+w(t)$ is ISS with respect to some seminorm $\vertiii{\cdot}$, that is:
    $$\vertiii{z(t)}\leq \beta_k(\vertiii{z(T_0)},t)+\gamma_k(\sup_{t\geq T_0}\|w(t)\|)$$
    where $\beta_k\in \mathcal{K}\mathcal{L}$, $\gamma_k\in \mathcal{K}$, and the seminorm satisfies $\vertiii{z}=0 \iff z\in \mathrm{span}(\1)$.
\end{ass}
\noindent It implies consensus of the individual subsystems and will be needed to prove consensus of the composed system.  Finally,

\begin{ass}\label{ass:kiscont}
     Let $\Lc_k\in C^{n-1-k}$ be chosen such that $\|\dtk{j} \Lc_k(z,t)\|\leq \alpha_{k,j}(\max_{0\leq i\leq j}\|z^{(j)}\|)$ for some functions $\alpha_{k,j}\in \mathcal{K}$, for all $j\leq n-k-1$, and all time $t\geq 0$.
\end{ass}
\noindent This assumption asserts a smoothness of the composing operators~$\Lc_k$. With this assumption, we can prove that coordination of~$\xi_k$ is equivalent to the coordination of~$x$, $\dot{x},\dots$, $x^{n-1}$. With these assumptions established, we are ready to state our main theorems.

\pagebreak

\section{Main Results}\label{sec:main}
Consider the following result, which establishes that the composition of consensus protocols according to~\eqref{eq:nonlinear_serial} will also achieve consensus.
\begin{theorem}\label{thrm:asymptoticconsensus}
Let each subsystem \( \mathcal{L}_k \), implement relative feedback 
according to Assumption~\ref{ass:kisinvariant}, and be chosen such that each
unperturbed system
\[
\dot{z}_k = \mathcal{L}_k(z_k, t)
\]
admits a unique solution for every initial condition \( z_k(0) \) that %, and 
satisfies
\[
\lim_{t \to \infty} \|z_k(t) - \mathbf{1} a_k(t)\| = 0,
\]
for some function $a_k$. Assume additionally that each \( \mathcal{L}_k \) satisfies Assumptions~\ref{ass:kislipschitz}--\ref{ass:kiscont} for \( k = 1, \dots, n-1 \). Then, the compositional consensus system~\eqref{eq:nonlinear_serial} admits a unique solution~\( x \), and this solution achieves \( n\)\ts{th}-order consensus.
\end{theorem}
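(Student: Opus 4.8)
The strategy is to pass to the cascade coordinates $\xi_1,\dots,\xi_n$ of~\eqref{eq:statespace}, in which the closed loop reads $\dot\xi_k=-\Lc_k(\xi_k,t)+\xi_{k+1}$ for $k<n$ and $\dot\xi_n=-\Lc_n(\xi_n,t)$. Since each $\Lc_k$ is globally Lipschitz in its state argument and piecewise continuous in $t$ (Assumption~\ref{ass:kislipschitz}), the vector field in~\eqref{eq:statespace} is globally Lipschitz in $\xi$, so the standard global existence--uniqueness theorem gives a unique $\xi(\cdot)$ on $[0,\infty)$. The triangular map $\xi_1=x$, $\xi_{k+1}=\dot\xi_k+\Lc_k(\xi_k,t)$ is a well-defined, invertible change of coordinates between $(x,\dot x,\dots,x^{(n-1)})$ and $(\xi_1,\dots,\xi_n)$, thanks to the smoothness required in Assumption~\ref{ass:kiscont}; this yields the unique solution $x=\xi_1$ of~\eqref{eq:nonlinear_serial}, together with the identities $x^{(j)}=\xi_1^{(j)}$. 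Write $P=I-\frac1N\1\1^\top$ for the projection onto $\1^\perp$. Since the seminorm $\vertiii{\cdot}$ is continuous and vanishes exactly on $\mathrm{span}(\1)$, it restricts to a norm on $\1^\perp$, so $c\|Pz\|\le\vertiii{z}\le C\|Pz\|$ for constants $0<c\le C$. The goal thus reduces to two claims: (A)~$\vertiii{\xi_k(t)}\to0$ for every $k$; and (B)~$\|P\xi_k^{(i)}(t)\|\to0$ for all $k,i$ with $k+i\le n$. Claim~(B) with $k=1$ is exactly $n\ts{th}$-order consensus.

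\emph{Step (A): a backward cascade.} I argue by downward induction on $k$. For $k=n$ the trajectory $\xi_n$ solves precisely the unperturbed system $\dot z_n=-\Lc_n(z_n,t)$, so by hypothesis $\|\xi_n(t)-\1 a_n(t)\|\to0$, whence $\vertiii{\xi_n(t)}\le C\|\xi_n(t)-\1 a_n(t)\|\to0$. For the inductive step, assume $\vertiii{\xi_{k+1}(t)}\to0$. Put $b_k(t)=\int_0^t\frac1N\1^\top\xi_{k+1}(s)\,\mathrm{d}s$ and $\zeta_k=\xi_k-\1 b_k$; since $\1\dot b_k=(I-P)\xi_{k+1}$, the relative-feedback property (Assumption~\ref{ass:kisinvariant}) gives
\[
\dot\zeta_k=-\Lc_k(\xi_k,t)+\xi_{k+1}-\1\dot b_k=-\Lc_k(\zeta_k,t)+P\xi_{k+1}(t).
\]
The input $w(t):=P\xi_{k+1}(t)$ obeys $\|w(t)\|\le C\vertiii{\xi_{k+1}(t)}\to0$, so $\|w(t)\|\le M_k$ for all $t\ge T_0$ with $T_0$ large enough, and Assumption~\ref{ass:kislocallyISS} yields $\vertiii{\zeta_k(t)}\le\beta_k(\vertiii{\zeta_k(T_0)},t)+\gamma_k(\sup_{s\ge T_0}\|w(s)\|)$. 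A standard converging-input argument --- fix $\varepsilon>0$, enlarge $T_0$ until $\gamma_k(\sup_{s\ge T_0}\|w(s)\|)\le\varepsilon$, then let $t\to\infty$ so the $\beta_k$ term vanishes --- gives $\limsup_{t\to\infty}\vertiii{\zeta_k(t)}\le\varepsilon$, hence $\vertiii{\zeta_k(t)}\to0$ and $\vertiii{\xi_k(t)}=\vertiii{\zeta_k(t)+\1 b_k(t)}=\vertiii{\zeta_k(t)}\to0$.

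\emph{Step (B): lifting consensus to the derivatives.} The $j=0$ instance of Assumption~\ref{ass:kiscont} gives $\|\Lc_k(0,t)\|\le\alpha_{k,0}(0)=0$, so $\Lc_k(0,t)\equiv0$. Differentiating $\dt(P\xi_k)=-P\Lc_k(\xi_k,t)+P\xi_{k+1}$ a total of $i-1$ times gives
\[
P\xi_k^{(i)}=-P\,\dtk{i-1}\!\big[\Lc_k(\xi_k(\cdot),\cdot)\big]+P\xi_{k+1}^{(i-1)}.
\]
By Assumption~\ref{ass:kisinvariant}, $\Lc_k(\xi_k(t),t)=\Lc_k(P\xi_k(t),t)$ for all $t$, so Assumption~\ref{ass:kiscont} --- applicable since $i-1\le n-1-k$ whenever $k+i\le n$ --- bounds the first term by $\|P\|\,\alpha_{k,i-1}\big(\max_{0\le l\le i-1}\|P\xi_k^{(l)}(t)\|\big)$. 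Hence the statement $Q(k,i):\ \|P\xi_k^{(i)}(t)\|\to0$ satisfies $Q(k,i)\Leftarrow\{Q(k,0),\dots,Q(k,i-1)\}\cup\{Q(k+1,i-1)\}$. With the base cases $Q(k,0)$, $k=1,\dots,n$, supplied by Step~(A), an induction on $i$ establishes $Q(k,i)$ for all $k+i\le n$; in particular $\|Px^{(j)}(t)\|=\|P\xi_1^{(j)}(t)\|\to0$ for $j=0,\dots,n-1$, i.e.\ $n\ts{th}$-order consensus.

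\emph{Expected obstacles.} The conceptual crux is Step~(A): the signal $\xi_{k+1}$ feeding the $k\ts{th}$ subsystem may have an unbounded component along $\1$ (each layer's consensus value drifts in time), so Assumption~\ref{ass:kislocallyISS}, which presumes a bounded input, cannot be applied to $\xi_k$ directly; the time-varying shift by $\1 b_k$ --- legitimate precisely because of relative feedback (Assumption~\ref{ass:kisinvariant}) --- is what turns the cascade step into a genuine bounded-input ISS estimate. The main bookkeeping effort is in Step~(B): one must verify that the order to which each $\Lc_k$ is differentiated along the trajectory never exceeds $n-1-k$, which is exactly the smoothness budget allocated by Assumption~\ref{ass:kiscont}; the same assumptions, together with Assumption~\ref{ass:kislipschitz}, also supply the regularity of $t\mapsto\xi(t)$ needed to carry out those differentiations.
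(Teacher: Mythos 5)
Your proposal is correct and follows essentially the same route as the paper: Step (A) is the backward ISS induction of Lemma~\ref{lem:statespaceconsensus} (with the shift $\1 b_k$ playing the role of the paper's $\1\int_0^t a_{k+1}$), and Step (B) is the needed direction of Lemma~\ref{lem:consensusequivalnce}, where your observation that $\Lc_k(\xi_k(t),t)=\Lc_k(P\xi_k(t),t)$ identically in $t$ substitutes neatly for the paper's translation-invariance Lemma~\ref{lem:consensus_invariance}. The only cosmetic caveats are that $\alpha_{k,0}(0)=0$ (and hence $\Lc_k(0,t)\equiv 0$) relies on the standard class-$\mathcal{K}$ convention rather than the paper's stated definition---an implicit convention the paper's own proof also uses---and the claim $\Lc_k(0,t)\equiv0$ is in fact never needed.
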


We now present the lemmas that form the basis of the proof of Theorem~\ref{thrm:asymptoticconsensus}.
\begin{lemma}\label{lem:consensusequivalnce}
    If all $\Lc_k$, $k=1,\ldots,n$ implement relative feedback according to~Assumption~\ref{ass:kisinvariant}, and satisfies the smoothness and boundedness Assumption~\ref{ass:kiscont} for all $ k\leq n-1$, and all times $t\geq 0$. Then, the following two are equivalent:
\begin{itemize}
    \item[i)] The solution $x$ of the compositional consensus \eqref{eq:nonlinear_serial} achieves $n\ts{th}$-order consensus for any initial condition;
    \item[ii)] the states $\xi_k$, $k=1,\ldots,n$, of \eqref{eq:statespace} achieve first-order consensus for any initial condition.
\end{itemize}
\end{lemma}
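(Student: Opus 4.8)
The plan is to establish the equivalence by showing that the change of variables $\xi_1 = x$, $\xi_{k+1} = \dot\xi_k + \Lc_k(\xi_k,t)$ is, in an appropriate sense, ``triangular'' and invertible modulo $\mathrm{span}(\1)$, so that consensus in one set of coordinates transfers to the other. The key structural fact is that $\xi_k$ is a function of $x, \dot x, \dots, x^{(k-1)}$ and the operators $\Lc_1,\dots,\Lc_{k-1}$ (and their time derivatives), while conversely $x^{(k-1)}$ can be recovered from $\xi_1,\dots,\xi_k$. Assumption~\ref{ass:kiscont} is exactly what makes these expressions well-defined: $\Lc_k \in C^{n-1-k}$ guarantees that the derivatives appearing in the recursion exist up to the needed order, and the $\mathcal{K}$-function bounds $\|\dtk{j}\Lc_k(z,t)\| \le \alpha_{k,j}(\max_i \|z^{(i)}\|)$ ensure that boundedness and decay of the relevant signals propagate through the transformation.

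First I would make the forward direction precise. Differentiating the defining relation repeatedly, one gets $\xi_{k+1} = x^{(k)} + P_k\big(x, \dot x, \dots, x^{(k-1)}, t\big)$, where $P_k$ is a fixed expression built from $\Lc_1,\dots,\Lc_k$ and their time derivatives up to order $k-1$; in particular $P_k$ involves only derivatives of the $\Lc_j$'s of order at most $n-1-j$, so Assumption~\ref{ass:kiscont} applies. I would prove by induction on $k$ the claim: $\xi_j \to \1 b_j(t)$ for $j = 1,\dots,k$ (i.e. $\vertiii{\xi_j}\to 0$) if and only if $x^{(j-1)} \to \1 c_{j-1}(t)$ for $j = 1,\dots,k$. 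For the step, I would use Assumption~\ref{ass:kisinvariant}, which gives $\Lc_j(\xi_j,t) = \Lc_j(\xi_j - \1 b_j(t), t)$, together with the Lipschitz property (Assumption~\ref{ass:kislipschitz}) to conclude that $\Lc_j(\xi_j,t)\to 0$ whenever $\xi_j - \1 b_j(t)\to 0$; the analogous statement for the time-derivative terms $\dtk{i}\Lc_j$ follows from the $\mathcal{K}$-bound in Assumption~\ref{ass:kiscont} once one knows the lower-order derivatives of $x$ (equivalently the lower-order $\xi$'s) are converging to consensus, which is supplied by the induction hypothesis. Then the relation $\xi_{k+1} - x^{(k)} = P_k(\cdots)$ shows that $\vertiii{\xi_{k+1}}\to 0 \iff \vertiii{x^{(k)}}\to 0$, closing the induction.

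The reverse direction is symmetric: solving the triangular system for $x^{(k)}$ in terms of $\xi_1,\dots,\xi_{k+1}$ gives $x^{(k)} = \xi_{k+1} - P_k(x,\dot x,\dots,x^{(k-1)},t)$, and the same inductive argument (now assuming the $\xi_j$'s reach first-order consensus) shows each $x^{(k-1)}$ does too. A technical point worth spelling out is that first-order consensus of $\xi_k$ as in item (ii) means $\vertiii{\xi_k(t)}\to 0$ in the seminorm, which by the property $\vertiii{z}=0 \iff z\in\mathrm{span}(\1)$ is equivalent to $\|\xi_k(t) - \1 a_k(t)\|\to 0$ for the appropriate scalar signal $a_k(t) = \frac{1}{N}\1^\top \xi_k(t)$ (or any consistent choice); I would fix this correspondence at the outset so the two notions of ``consensus'' used in (i) and (ii) are literally interchangeable.

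I expect the main obstacle to be the bookkeeping in the induction step: one must track not just that $\xi_k \to$ consensus but that \emph{all} the intermediate time-derivative terms $\dtk{i}\Lc_j(\xi_j,t)$ that appear in $P_k$ converge, and this requires knowing that the derivatives $x^{(i)}$ themselves converge, i.e. the induction hypothesis must carry the full chain of derivatives, not a single statement. The $\mathcal{K}$-function bounds make this a matter of composing continuous functions that vanish at zero, so no quantitative estimate is needed, but care is required to ensure the order of differentiation never exceeds $n-1-j$ for $\Lc_j$ — which is precisely why the hypothesis is stated as $\Lc_k \in C^{n-1-k}$ and why the equivalence is claimed only up to order $n-1$. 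A secondary subtlety is confirming that the signals $a_k(t)$, $c_k(t)$ are themselves differentiable enough for these manipulations to be valid; this follows because $x$ is $n$ times differentiable by construction of the closed loop and the $\Lc_j$ have the assumed smoothness, so the scalar averages inherit the regularity.
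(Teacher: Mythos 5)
Your overall route---the triangular change of variables $\xi_{k+1}=x^{(k)}+P_k(x,\dot x,\dots,x^{(k-1)},t)$, a two-way induction, and transfer of consensus through the bounds of Assumption~\ref{ass:kiscont}---is the same as the paper's, but there is a genuine gap at the decisive step. You claim that once the lower-order signals reach consensus, the terms $\dtk{i}\Lc_j(\xi_j,t)$ appearing in $P_k$ vanish ``by the $\mathcal{K}$-bound in Assumption~\ref{ass:kiscont}.'' That bound controls $\|\dtk{i}\Lc_j(z,t)\|$ by $\alpha_{j,i}\bigl(\max_{l\le i}\|z^{(l)}\|\bigr)$, i.e., by the \emph{raw} norms of $\xi_j$ and its derivatives, and these do not go to zero under consensus: the common trajectory $\1 a_{j}(t)$ is arbitrary and, in the formation-control setting this lemma is built for, typically unbounded. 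What makes the bound usable is that the total-derivative expressions $\dtk{i}\Lc_j(z,t)=B_{j,i}(z,\dot z,\dots,z^{(i)},t)$ are invariant to translating \emph{each} argument separately by an arbitrary multiple of $\1$, so that the $\mathcal{K}$-bound can be applied to the deviations $\xi_j^{(l)}-\1 a_{j+l}(t)$, which do decay. This invariance is not an immediate restatement of Assumption~\ref{ass:kisinvariant} (which concerns only the value of $\Lc_j$, not its differentiated expressions); the paper isolates it as the supporting Lemma~\ref{lem:consensus_invariance} and proves it by a separate chain-rule induction---it is the main technical content of the proof. Your proposal never establishes it, and without it the induction step ``lower-order consensus $\Rightarrow$ the disagreement part of $P_k$ vanishes'' does not follow as justified.

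Two smaller points. First, you invoke the Lipschitz Assumption~\ref{ass:kislipschitz} to conclude $\Lc_j(\xi_j,t)\to 0$, but that assumption is not among the lemma's hypotheses, and Lipschitz continuity alone only yields $\Lc_j(\xi_j,t)-\Lc_j(0,t)\to 0$; the correct tool is Assumption~\ref{ass:kiscont} at $j=0$ combined with the translation invariance above. Second, since both items (i) and (ii) quantify over all initial conditions, the exact correspondence between the initial data $(x(0),\dots,x^{(n-1)}(0))$ and $(\xi_1(0),\dots,\xi_n(0))$ must be established (the paper does this by evaluating the same triangular relation at $t=0$); you gesture at this with ``invertible modulo $\mathrm{span}(\1)$,'' but the map is in fact exactly invertible and the argument should be spelled out rather than left at the structural level.
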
\vspace{0.5ex}

\begin{proofsketch} (Full proof given in Appendix~\ref{app:conensusequivalenceproof}.)
We prove the equivalence by induction in two directions. First, we show that the initial condition of \( x \) and its first \( n-1 \) derivatives uniquely determine the initial conditions of the states \( \xi_k \). Then, using a similar argument, the reverse direction can be proven.

Since \( x(t) = \xi_1(t) \), and Assumption~\ref{ass:kiscont} ensures sufficient smoothness, we may recursively differentiate this relation to recover all \( \xi_k(t) \). The supporting Lemma~\ref{lem:consensus_invariance}, a consequence of the relative feedback Assumption~\ref{ass:kisinvariant}, allows us to bound the terms \( \dtk{j} \mathcal{L}_k(\xi_k, t) \) in terms of deviations from consensus
\[
\|\dtk{j} \Lc_k(\xi_k, t)\| \leq \alpha_{k,j}(\max_{0 \leq i\leq j}(\|\xi_k^{(i)}-\1 b_{k+i}(t)\|).
\]
We then apply induction in \( k \) to show that \( n \)\ts{th}-order consensus of \( x \) implies first-order consensus of all \( \xi_k \) and induction in \( j \) to show the converse. The full details are provided in Appendix~\ref{app:conensusequivalenceproof}.
\end{proofsketch}
\vspace{2mm}

Having established the equivalence between \eqref{eq:nonlinear_serial} and \eqref{eq:statespace}, we now show that the latter achieves consensus under relatively mild conditions.

\begin{lemma}\label{lem:statespaceconsensus}
Let each subsystem \( \mathcal{L}_k \) implement relative feedback according to~Assumption~\ref{ass:kisinvariant}, and assume that the unperturbed system
\[
\dot{z}_k = \mathcal{L}_k(z_k, t)
\]
admits a unique solution for any initial condition \( z_k(0) \), and satisfies
\[
\lim_{t \to \infty} \|z_k(t) - \mathbf{1} b_k(t)\| = 0
\]
for some function \(b_k \). Assume additionally that each \( \mathcal{L}_k \) satisfies Assumptions~\ref{ass:kislipschitz} and~\ref{ass:kislocallyISS} for \( k = 1, \dots, n-1 \). Then, the states \( \xi_k \) in~\eqref{eq:statespace} admit a unique solution, and satisfy
\[
\lim_{t \to \infty} \|\xi_k(t) - \mathbf{1} a_k(t)\| = 0
\]
for some functions \( a_k \).
\end{lemma}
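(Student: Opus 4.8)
The plan is to argue inductively over the chain $\xi_1,\dots,\xi_n$, propagating consensus from the last subsystem backwards — or rather, propagating a decaying "input" forward through the cascade~\eqref{eq:statespace}. Recall that the state-space form reads $\dot\xi_k = \mathcal{L}_k(\xi_k,t) + \xi_{k+1}$ for $k<n$ and $\dot\xi_n = \mathcal{L}_n(\xi_n,t)$. The natural induction variable is the seminorm $\vertiii{\xi_k}$, and the key structural fact is that $\mathcal{L}_k$ being relative-feedback (Assumption~\ref{ass:kisinvariant}) means the $\mathbf{1} a_k(t)$ components decouple: $\vertiii{\cdot}$ annihilates $\mathrm{span}(\mathbf{1})$, so convergence $\|\xi_k - \mathbf{1} a_k\|\to 0$ is equivalent to $\vertiii{\xi_k}\to 0$ once one checks the two quantities are comparable on the quotient. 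I would first establish existence and uniqueness of the solution on $[0,\infty)$: Assumption~\ref{ass:kislipschitz} gives local existence/uniqueness for each $\mathcal{L}_k$, and since the cascade is triangular, one solves for $\xi_n$ first (global in time by the Lipschitz bound, which precludes finite-time blow-up), then feeds $\xi_n$ as a bounded, piecewise-continuous forcing term into the $\xi_{n-1}$ equation, and so on up the chain.

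The core of the argument is the induction. Base case $k=n$: the equation $\dot\xi_n=\mathcal{L}_n(\xi_n,t)$ is exactly the unperturbed system, so by hypothesis $\|\xi_n(t)-\mathbf{1} b_n(t)\|\to 0$, hence $\vertiii{\xi_n(t)}\to 0$; set $a_n = b_n$. Inductive step: suppose $\vertiii{\xi_{k+1}(t)}\to 0$. I want to conclude $\vertiii{\xi_k(t)}\to 0$ for the system $\dot\xi_k = \mathcal{L}_k(\xi_k,t) + \xi_{k+1}(t)$. Here is where Assumption~\ref{ass:kislocallyISS} enters, but it is stated for inputs $w$ that are \emph{bounded} by $M_k$, so the first sub-step is to show $\xi_{k+1}$ is bounded — or more precisely, that $\vertiii{\xi_{k+1}}$ is eventually bounded by any prescribed $M_k$ (it tends to $0$), so that after some time $T_0$ the ISS estimate applies. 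The ISS bound then gives
\[
\vertiii{\xi_k(t)} \le \beta_k\big(\vertiii{\xi_k(T_0)}, t-T_0\big) + \gamma_k\Big(\sup_{s\ge T_0}\|\xi_{k+1}(s)\|\Big).
\]
A standard $\varepsilon$-$\delta$ / "two-epoch" argument finishes it: given $\varepsilon>0$, first pick $T_0$ large enough that $\sup_{s\ge T_0}\|\xi_{k+1}(s)\|$ (in the relevant seminorm sense) makes $\gamma_k(\cdot) < \varepsilon/2$, then let $t\to\infty$ so the $\beta_k$ term drops below $\varepsilon/2$. Hence $\vertiii{\xi_k(t)}\to 0$, i.e.\ $\|\xi_k(t)-\mathbf{1} a_k(t)\|\to 0$ for an appropriate $a_k(t)$ (e.g.\ $a_k(t) = \tfrac1N \mathbf{1}^\top \xi_k(t)$, or the component singled out by the seminorm).

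The main obstacle I anticipate is the mismatch between the seminorm $\vertiii{\cdot}$ in which ISS is assumed and the Euclidean-type norm $\|\cdot\|$ appearing in the hypothesis $\|z_k-\mathbf{1} b_k\|\to 0$ and in the ISS input bound $\|w(t)\|\le M_k$. One must pin down that "$\|\xi_{k+1}-\mathbf{1} b_{k+1}\|\to 0$" can be converted into a usable bound on the ISS input: since $\mathcal{L}_k$ ignores the $\mathbf{1}$-component of its argument, the effective disturbance driving $\vertiii{\xi_k}$ is only the consensus-orthogonal part of $\xi_{k+1}$, whose norm is controlled by $\|\xi_{k+1}-\mathbf{1} b_{k+1}\|$ (choosing $b_{k+1}$ as the average) — this is exactly the content one needs from Assumption~\ref{ass:kisinvariant} combined with the equivalence $\vertiii{z}=0\iff z\in\mathrm{span}(\mathbf{1})$, and it should be isolated as a short preliminary observation (all seminorms vanishing precisely on $\mathrm{span}(\mathbf 1)$ are equivalent to $\mathrm{dist}(\cdot,\mathrm{span}(\mathbf 1))$ on $\mathbb{R}^N$, up to constants, by a compactness argument on the unit sphere of the quotient). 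A secondary, more bookkeeping-type obstacle is handling the forcing term's possible discontinuities (it inherits only piecewise continuity), but this only affects the regularity statement for solutions, not the limiting behavior, and is absorbed by the standard Carathéodory existence theory already implicit in Assumption~\ref{ass:kislipschitz}.
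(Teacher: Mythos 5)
Your overall architecture is the same as the paper's: Carath\'eodory existence/uniqueness for the cascade, backward induction from $k=n$, the ISS property of Assumption~\ref{ass:kislocallyISS} applied to each stage with a vanishing disturbance, and a two-epoch $\epsilon$-argument, finishing with the kernel property of the seminorm to translate $\vertiii{\cdot}\to 0$ into $\|\xi_k-\1 a_k\|\to 0$. However, the central displayed estimate in your inductive step is not valid as written: you invoke the ISS bound with input $w=\xi_{k+1}$ and gain term $\gamma_k(\sup_{s\geq T_0}\|\xi_{k+1}(s)\|)$, but $\xi_{k+1}$ converges to $\1 a_{k+1}(t)$, whose full norm need not be small or even bounded (think of a platoon drifting at constant speed), and Assumption~\ref{ass:kislocallyISS} is stated with the \emph{full} norm $\|w(t)\|\leq M_k$ and gain $\gamma_k(\sup\|w\|)$ --- so neither the hypothesis nor a useful bound follows from $\vertiii{\xi_{k+1}}\to 0$ alone. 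You flag this mismatch yourself in the last paragraph, but the remark that ``only the consensus-orthogonal part of $\xi_{k+1}$ drives $\vertiii{\xi_k}$'' is not something you can read off Assumption~\ref{ass:kisinvariant} directly, because the disturbance enters additively, not through $\Lc_k$; it has to be made precise by the explicit change of variables the paper uses, namely $z_k(t)=\xi_k(t)-\1\int_0^t a_{k+1}(\tau)\,\mathrm{d}\tau$, for which Assumption~\ref{ass:kisinvariant} gives $\dot z_k=-\Lc_k(z_k,t)+w_k(t)$ with $w_k=\xi_{k+1}-\1 a_{k+1}\to 0$ in the full norm; the ISS estimate then applies verbatim and yields $\vertiii{z_k}=\vertiii{\xi_k}\to 0$, and $a_k=b_k+\int_0^t a_{k+1}$. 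With that shift spelled out, your plan coincides with the paper's proof. One further small correction: $\Lc_n$ is not assumed Lipschitz (Assumptions~\ref{ass:kislipschitz}--\ref{ass:kislocallyISS} are only imposed for $k\leq n-1$), so existence and uniqueness of $\xi_n$ should be taken from the lemma's premise on the unperturbed system rather than from a Lipschitz argument.
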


\begin{proofsketch}(Full proof given in Appendix~\ref{app:lemstatespaceconsensus}.)
Existence and uniqueness follow from Carathéodory's existence and uniqueness theorem, which here follows from %is implied by 
Assumption~\ref{ass:kislipschitz}. The solution for \( \xi_n \) exists and achieves consensus by assumption. The remaining states \( \xi_k \) are shown to reach consensus through induction on \( k \).

In particular, through Assumption~\ref{ass:kisinvariant} we establish that
\[
\dot{\tilde{\xi}}_k = -\Lc_k(\tilde{\xi}_k, t) + w(t),
\]
where \( \tilde{\xi}_k(t) := \xi_k(t) - \1 \int_0^t a_{k+1}(\tau)\, \mathrm{d}\tau \), and \( w(t) := \xi_{k+1}(t) - \1 a_{k+1}(t) \). The input \( w(t) \) converges to zero due to the inductive hypothesis.

Then, by the local ISS property in Assumption~\ref{ass:kislocallyISS}, it follows that \( \tilde{\xi}_k \) achieves first-order consensus, which in turn implies consensus of \( \xi_k \). Repeating this argument recursively establishes the result for all \( \xi_k \). 
\end{proofsketch}

\vspace{2mm}
With Lemmas~\ref{lem:consensusequivalnce} and~\ref{lem:statespaceconsensus}, the main result in Theorem~\ref{thrm:asymptoticconsensus} is now readily established. Together, these lemmas provide sufficient conditions for the solution \( x \) to achieve \( n \)th-order consensus. Theorem~\ref{thrm:asymptoticconsensus} is thereby proven.
\begin{rem}
The existence and uniqueness of the solution \( x \) should be interpreted as a weak solution, i.e., a function that satisfies the differential equation~\eqref{eq:nonlinear_serial} almost everywhere. If a smooth solution is desired---i.e., one that satisfies the equation pointwise---one may strengthen Assumption~\ref{ass:kiscont} by requiring \( \Lc_k \in C^{n-k} \) instead of \( \Lc_k \in C^{n-k-1} \).
\end{rem}

\subsection{Example: serial consensus}
To better illustrate Theorem~\ref{thrm:asymptoticconsensus}, consider again the simplest case where each operator is a linear time-invariant function, i.e., \( \Lc_k(x,t) = L_k x \), which leads to the compositional consensus system also known as \emph{serial consensus}:
\begin{equation}\label{eq:linear_serial}
    \left(\dtk{} + L_n\right) \cdots \left(\dtk{} + L_1\right)(x) = 0.
\end{equation}
We now show the following.

\begin{proposition}\label{prop:serial}
    If each graph Laplacian \( L_k \) in~\eqref{eq:linear_serial} contains a (possibly different) directed spanning tree, then \( x \) achieves \( n \)\ts{th}-order consensus for any initial condition.
\end{proposition}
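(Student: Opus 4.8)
\textbf{Proof plan for Proposition~\ref{prop:serial}.}
The strategy is to verify that the linear time-invariant choice $\Lc_k(x,t) = L_k x$ satisfies every hypothesis of Theorem~\ref{thrm:asymptoticconsensus}, so that the conclusion follows immediately. There are essentially four things to check: the relative-feedback property (Assumption~\ref{ass:kisinvariant}), existence, uniqueness and asymptotic consensus of each unperturbed first-order system $\dot z_k = -L_k z_k$, the Lipschitz/measurability regularity (Assumption~\ref{ass:kislipschitz}), and the smoothness/growth bound (Assumption~\ref{ass:kiscont}). Most of these are routine for linear constant operators; the one that carries actual content is the consensus of $\dot z_k = -L_k z_k$ when the graph of $L_k$ contains a directed spanning tree.

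First I would record that $\Lc_k(x,t) = L_k x$ has $L_k \1 = 0$ because a graph Laplacian has zero row sums, hence $L_k(z + \1 a) = L_k z$, giving Assumption~\ref{ass:kisinvariant}. Lipschitz continuity in $z$ with constant $\|L_k\|$ independent of $t$, and (trivial) continuity in $t$, give Assumption~\ref{ass:kislipschitz}; and since $\dtk{j}\Lc_k(z,t) = L_k z^{(j)}$, the bound $\|\dtk{j}\Lc_k(z,t)\| \le \|L_k\| \max_{0\le i\le j}\|z^{(i)}\|$ holds with $\alpha_{k,j}(r) = \|L_k\| r \in \mathcal{K}$, and a constant operator is $C^\infty$, so Assumption~\ref{ass:kiscont} holds.

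For the core step I would invoke the classical fact cited in the excerpt (\cite{jour:Zhiyun2005directedconsensus,jour:Ren2005digraphproof}): if the graph underlying $L_k$ contains a directed spanning tree, then $L_k$ has a simple eigenvalue at $0$ with eigenvector $\1$, and all remaining eigenvalues have strictly positive real part. Consequently $e^{-L_k t} \to \1 v_k^\top$ as $t\to\infty$, where $v_k$ is the left Perron eigenvector normalized by $v_k^\top \1 = 1$, so that $z_k(t) = e^{-L_k t} z_k(0) \to \1 (v_k^\top z_k(0))$; taking $a_k(t) \equiv v_k^\top z_k(0)$ (a constant) gives $\|z_k(t) - \1 a_k(t)\| \to 0$. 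Existence and uniqueness of solutions of the linear ODE is standard. Thus all hypotheses of Theorem~\ref{thrm:asymptoticconsensus} are met for $k = 1,\dots,n$ (the regularity assumptions are only required for $k \le n-1$, but they hold for all $k$ here anyway), and the theorem yields that $x$ achieves $n\ts{th}$-order consensus for any initial condition.

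The only mild subtlety — and the thing I would be careful to state rather than gloss over — is that Theorem~\ref{thrm:asymptoticconsensus} as written asks merely for $\|z_k(t) - \1 a_k(t)\| \to 0$ for \emph{some} function $a_k$, not for a specific form; the linear case conveniently furnishes $a_k$ as a constant determined by the Perron eigenvector, so there is nothing further to do. I do not anticipate a genuine obstacle: the proposition is essentially a worked example confirming that the general framework specializes correctly, and the entire argument is a hypothesis-checking exercise resting on the spectral characterization of spanning-tree Laplacians.
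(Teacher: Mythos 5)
There is a genuine gap: your list of ``four things to check'' omits Assumption~\ref{ass:kislocallyISS}, which is part of the hypotheses of Theorem~\ref{thrm:asymptoticconsensus} (the theorem requires Assumptions~\ref{ass:kislipschitz}--\ref{ass:kiscont} for $k=1,\dots,n-1$, and that range includes the ISS assumption). Your spectral argument --- $e^{-L_k t}\to \1 v_k^\top$, hence $z_k(t)\to \1(v_k^\top z_k(0))$ --- only establishes consensus of the \emph{unperturbed} systems $\dot z_k=-L_k z_k$. It does not establish that the \emph{perturbed} system $\dot z=-L_k z+w(t)$ admits an ISS estimate $\vertiii{z(t)}\le \beta_k(\vertiii{z(T_0)},t)+\gamma_k(\sup_{t\ge T_0}\|w(t)\|)$ with respect to a seminorm satisfying $\vertiii{z}=0\iff z\in\mathrm{span}(\1)$. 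In the paper's proof this is singled out as the most involved step, and it is the step that carries the content needed by Lemma~\ref{lem:statespaceconsensus}: in the cascade~\eqref{eq:statespace} each inner state $\xi_k$ is driven by the (decaying but nonzero) disagreement of $\xi_{k+1}$, so mere asymptotic consensus of the isolated subsystems is not enough; you need robustness of that consensus to a bounded, vanishing input.

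The fix is exactly the computation the paper carries out and which your write-up skips: define the disagreement seminorm $\vertiii{z}_k:=\|L_k z\|$ (a valid seminorm with the required kernel because the spanning-tree condition makes the zero eigenvalue of $L_k$ simple with eigenvector $\1$), write the variation-of-constants solution of $\dot z=-L_k z+w$, premultiply by $L_k$, and use the decay bound $\|L_k e^{-L_k t}\|\le M_k e^{-\alpha t}$ (which does follow from the spectral facts you cite, since the zero mode is annihilated by $L_k$) to obtain
\[
\|L_k z(t)\|\le M_k e^{-\alpha(t-T_0)}\|L_k^{+}\|\,\|L_k z(T_0)\|+\frac{M_k}{\alpha}\sup_{\tau\ge T_0}\|w(\tau)\|,
\]
from which $\beta_k\in\mathcal{KL}$ and $\gamma_k\in\mathcal{K}$ can be read off. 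With that added, your remaining verifications (Assumptions~\ref{ass:kisinvariant}, \ref{ass:kislipschitz}, \ref{ass:kiscont} and the unperturbed consensus) coincide with the paper's and the proposition follows from Theorem~\ref{thrm:asymptoticconsensus} as you intend.
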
\vspace{0.5ex}

The proof of this proposition serves as an example of how to apply our compositional consensus result.

\setcounter{example}{0}
\begin{example}[\textbf{continued}]
We verify the assumptions of Theorem~\ref{thrm:asymptoticconsensus} for the case \( \Lc_k(x,t) = L_k x(t) \). First, since
\[
\|L_k x - L_k y\| \leq \|L_k\| \|x - y\|,
\]
each operator is Lipschitz and time-invariant, so Assumption~\ref{ass:kislipschitz} is satisfied. The invariance property of Assumption~\ref{ass:kisinvariant} follows from the definition of the Laplacian, since \( L_k(\1 a(t)) = 0 \) for any scalar function \( a(t) \).

The most involved step is verifying Assumption~\ref{ass:kislocallyISS}. We define the seminorm \( \vertiii{z}_k := \|L_k z\| \), which is valid since \( L_k \) has a simple zero eigenvalue with corresponding eigenvector~\( \1 \) (by the spanning tree condition). Consider the perturbed consensus system
\[
\dot{z}(t) = -L_k z(t) + w(t).
\]
Its general solution is given by
\[
z(t) = e^{-L_k(t - T_0)} z(T_0) + \int_{T_0}^t e^{-L_k(t - \tau)} w(\tau)\, \mathrm{d}\tau.
\]
Premultiplying by \( L_k \), taking norms, and using the bound \( \|L_k e^{-L_k t}\| \leq M_k e^{-\alpha t} \), valid for some \( M_k, \alpha > 0 \), yields
\[
\|L_k z(t)\| \leq M_k e^{-\alpha(t - T_0)} \|L_k^+\| \|L_k z(T_0)\| + \frac{M_k}{\alpha} \sup_{\tau \geq T_0} \|w(\tau)\|,
\]
where \( L_k^+ \) denotes a pseudoinverse of $L_k$. From this inequality, one can identify \( \beta_k(\cdot) \in \mathcal{K}\mathcal{L} \) and \( \gamma_k(\cdot) \in \mathcal{K} \), verifying the local ISS property.
That the system \( \dot{z} = -L_n z \) asymptotically reaches consensus is well known, but it is also a direct consequence of the above discussion.

Finally, Assumption~\ref{ass:kiscont} concerns the smoothness of \( \Lc_k \). Since \( \partial_t \Lc_k(x,t) = 0 \), \( \partial_x \Lc_k(x,t) = L_k \), and all higher-order derivatives
\[
\frac{\partial^{i+j}}{\partial x^j \partial t^i} \Lc_k(x,t) = 0
\]
vanish, it follows that \( \Lc_k \in C^{n - 1 - k} \) as required.
The time derivative of \( \Lc_k(x,t) \) is
\[
\dtk{j} \Lc_k(x,t) = L_k x^{(j)},
\]
so
\[
\left\| \dtk{j} \Lc_k(x,t) \right\| \leq \|L_k\| \|x^{(j)}\|,
\]
which satisfies Assumption~\ref{ass:kiscont} with bounding functions \( \alpha_{k,j}(r)=\|L_k\||r| \), which clearly are of class $\mathcal{K}$. 
\end{example}

This result confirms the stability of the linear serial consensus system previously established by different methods in~\cite{jour:hansson2023scalable}. Although many assumptions need to be checked, most are straightforward. The more involved ones---Assumptions~\ref{ass:kislocallyISS} and the stability of unperturbed first-order system---can be verified using classical first-order consensus theory, as we will demonstrate in the following applications.

\subsection{Implementation of compositional consensus}
The implementation of compositional consensus raises a few key questions: 1) Is the protocol implementable using only local and relative feedback? 2) Will the control signal be well-defined? 

The answer to the first question is yes; provided that each $\Lc_k$ implements relative local feedback, the resulting feedback will also be relative and local. To illustrate this, consider the third-order case
\begin{multline*}    
    x^{(3)}=-\Lc_3(\ddot{x}+\Lc_2(\dot{x}+\Lc_1(x,t),t)+\dtk{}\Lc_1(x,t),t)\\-\dtk{}\Lc_2(\dot{x}+\Lc_1(x,t),t) -\dtk{2}\Lc_1(x,t) 
\end{multline*}
and suppose that the unweighted adjacency matrix $W_k$ encodes the communication structure of $\Lc_k$, $k = 1,2,3$. That is, $[W_k]_{i,j}=1 \iff [\Lc_k(z+\mathbf{e}_j,t)-\Lc_k(z,t)]_i\neq 0$ where $\mathbf{e}_j$ is the $j\ts{th}$ unit vector.  
We may now work backward to deduce the adjacency matrix encoding the full feedback. The term $z_1=\Lc_1(x,t)$ depends on measurements coming from the graph associated with $W_1$ and so will all its higher derivatives. Let $z_2=\Lc_2(\dot{x}+z_1,t)$, which then depends on signals encoded by $W_2(W_1+I)$. Finally, since $z_3=\Lc_3(\ddot{x}+z_2+\dot{z}_1)$, these signals will be encoded by $W_3(I+W_2(I+W_1)+W_1)$. In general, we see that all measurements needed in the feedback are contained by the graph associated with $ (W_k+I)(W_{k-1}+I)\cdots (W_1+I)$. In the special case where the $W_k$ are identical, the product implies a $k$-hop neighborhood in the graph in question.

As to the second question, it is not in general guaranteed that the highest derivative $x^{n}(t)$ is well-defined for all $t$ in the fairly general setting of~Theorem~\ref{thrm:asymptoticconsensus}. 
%In the setting of Theorem~\ref{thrm:asymptoticconsensus}, a well-defined $x^{n}$ is not generally guaranteed to exist for all $t$. 
The issue can be effectively illustrated by considering the second-order case. 
\[u(x,t)=-\Lc(\dot{x}+\Lc_1(x,t),t) -\dtk{}\Lc_1(x,t).\]
While the first term suffers no problem, the term  $\dtk{}\Lc_1(x,t)$ may be problematic, since Theorem~\ref{thrm:asymptoticconsensus} only requires $\Lc_1 \in C^0$.
The derivative term may, therefore, instead be interpreted in terms of the Dini derivative, that is, 
\[
D^+(\Lc_1(x(t),t)=\limsup_{\Delta t>0,\Delta t\to 0}\dfrac{\Lc_1(x(t+\Delta t), t+\Delta t)}{\Delta t}.
\]
This will always be well-defined; see e.g. \cite[A.7]{bullo2024-CTDS}, however, potentially unbounded. From a more practical view, one can consider a function that approximates the derivative almost everywhere. This is relevant when using nonlinear functions like the $1$- and $\infty$-norms, saturations, and dead zones. In the case of saturations one may use $D^+\sat(x_i(t))\overset{\mathrm{a.e.}}{=} \dot{x}_i(t)\mathbf{I}_{(-1,1)}(x_i(t))$
where $\mathbf{I}_{(-1,1)}(\cdot)$ is an indicator function.

\section{Applications of Compositional Consensus}\label{sec:case}
In this section, we explore applications of Theorem~\ref{thrm:asymptoticconsensus} in representative nonlinear and time-varying networked systems. This amounts to verifying whether the protocols satisfy the Assumptions in Section~\ref{sec:assumptions}.

\subsection{Saturated consensus}
 A common type of nonlinearity in control systems is \emph{saturation}, which arises due to actuator limitations or other physical constraints. This example shows that saturated signals can be handled within the compositional consensus framework.

Earlier works, such as~\cite{jour:LiConsensusSaturation}, have established asymptotic consensus stability for the unforced system
\[
\dot{z} = -\mathrm{sat}(Lz).
\]
This makes the corresponding operator %\( \Lc_n(x,t) = Lx \) 
admissible as the outermost function in the compositional consensus~\eqref{eq:nonlinear_serial}, i.e., $\Lc_n(z,t) = \mathrm{sat}(Lz)$. 
In the following, we extend the analysis to bounded-input scenarios, thereby enabling the use of saturation-based dynamics also for \( \Lc_{n-1}(z,t) \); this follows since the remaining assumptions are simple to check. The following proposition proves the applicability to Assumption~\ref{ass:kislocallyISS}.

\begin{proposition}\label{prop:saturation}
Consider the consensus system
\[
\dot{z} = -\mathrm{sat}(Lz) + d(t),
\]
where \( L \) is a graph Laplacian that contains a directed spanning tree. Then there exists a constant \( d_{\max} > 0 \) such that, for all disturbances satisfying \( \|d(t)\|_\infty < d_{\max} \), the disagreement satisfies
\[
\|Lz(t)\|_\infty \leq \beta(\|Lz(0)\|_\infty, t) + \gamma(\sup_{t \geq 0} \|d(t)\|_\infty)
\]
for some functions \( \beta \in \mathcal{KL} \) and \( \gamma \in \mathcal{K} \). 
\end{proposition}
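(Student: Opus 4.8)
The plan is to reduce the claimed input-to-state estimate for the matrix seminorm $\|Lz\|_\infty$ to the same estimate for the oscillation seminorm $\vertiii{z}:=\max_i z_i-\min_i z_i$, to drive $\vertiii{z}$ into a bounded set by a propagation argument that uses the directed spanning tree through an associated $\delta$-graph, to conclude with a linear analysis once the saturation becomes inactive, and finally to translate the resulting $\mathcal{KL}+\mathcal{K}$ bound back to $\|Lz\|_\infty$. Existence and uniqueness of $z(\cdot)$ come for free from Carath\'eodory's theorem, since $z\mapsto-\mathrm{sat}(Lz)$ is globally Lipschitz. The reduction itself is routine: because $L$ contains a directed spanning tree, $\ker L=\mathrm{span}(\1)$ and $L$ is invertible on any complement, so $\|Lz\|_\infty$ and $\vertiii{z}$ are equivalent seminorms (both vanish exactly on $\mathrm{span}(\1)$); moreover the vector field depends on $z$ only through $Lz$, so the closed loop is invariant under $z\mapsto z+\1a$ and it suffices to bound $\vertiii{z}$.

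Two estimates drive $\vertiii{z}$ down. The first is an a priori bound: with $\bar k(t)\in\arg\max_iz_i(t)$ and $\underline k(t)\in\arg\min_iz_i(t)$ we have $[Lz]_{\bar k}\ge0\ge[Lz]_{\underline k}$, hence $\mathrm{sat}([Lz]_{\bar k})\ge0\ge\mathrm{sat}([Lz]_{\underline k})$, so the upper Dini derivative satisfies $D^+\vertiii{z(t)}\le\dot z_{\bar k}-\dot z_{\underline k}\le2\|d(t)\|_\infty$; in particular $\vertiii{z}$ grows at most linearly (no finite escape), and for $d\equiv0$ it is non-increasing. The second, central, estimate is a propagation argument: fix $\delta=\min\{W_{ij}:W_{ij}>0\}$ so the $\delta$-graph of $W$ equals $\mathcal G$ and contains a directed spanning tree, of diameter $\ell\le N-1$. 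The local engine is that a currently maximal node $i$ with an in-neighbor $j$ satisfying $z_i-z_j\ge1/\delta$ has $[Lz]_i\ge W_{ij}(z_i-z_j)\ge1$ (the remaining terms being nonnegative at a maximizer), so $\dot z_i\le-1+\|d\|_\infty$, and symmetrically at a minimizer. By the consensus-contraction argument of~\cite{Moreau2004stabconsensus} over the $\delta$-graph, adapted to account for the saturation and the disturbance, the spread $\max_iz_i-\min_iz_i$ is forced to decrease at a rate bounded below by some $\kappa(\ell,\delta)>0$ up to an $O(\|d\|_\infty)$ correction, for as long as it exceeds a threshold $\rho:=O(\ell/\delta)$ (the threshold arises because the engine above only bites when neighboring values differ by at least $1/\delta$, and information needs up to $\ell$ hops to traverse the tree). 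Choosing $d_{\max}$ small enough that this correction stays below $\kappa$, the trajectory enters $\{\vertiii{z}\le\max(\rho,c\sup_t\|d\|_\infty)\}$ in finite time and remains there.

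To finish I would patch three regimes. For $\vertiii{z}\ge\rho$, the propagation estimate gives decay at a rate bounded below uniformly in the starting height. For $\vertiii{z}$ in the compact annulus between $\rho$ and a small $c_0<1/\max_i\sum_jW_{ij}$, the unperturbed result of~\cite{jour:LiConsensusSaturation} makes $\vertiii{z}$ converge to $0$, and since the flow depends continuously (Lipschitz) on $d$, a small enough disturbance still produces a uniform decrease over a window determined by compactness, driving $\vertiii{z}$ below $c_0$. For $\vertiii{z}\le c_0$ the saturation is inactive, so $\dot z=-Lz+d$ and the linear consensus estimate already derived in the paper, $\|Lz(t)\|_\infty\le Me^{-\alpha(t-t_1)}\|L^+\|_\infty\|Lz(t_1)\|_\infty+\tfrac{M}{\alpha}\sup_t\|d\|_\infty$, applies; shrinking $d_{\max}$ once more keeps $\|Lz(t)\|_\infty<1$ for all later $t$, so by a bootstrap the saturation stays inactive and this linear tail holds forever after. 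Concatenating the decay phase, the annulus decrease, and the linear tail, and interpolating between window endpoints with the a priori bound $D^+\vertiii{z}\le2\|d\|_\infty$, yields $\vertiii{z(t)}\le\beta_0(\vertiii{z(0)},t)+\gamma_0(\sup_t\|d\|_\infty)$ with $\beta_0\in\mathcal{KL}$ and $\gamma_0\in\mathcal K$ (the additive $O(\|d\|_\infty)$ contributions accumulating to a convergent geometric series), and the seminorm equivalence noted above converts this to the asserted bound on $\|Lz(t)\|_\infty$.

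I expect the propagation estimate to be the main obstacle: there is no spectral shortcut as in the linear case, because $\mathrm{sat}(Lz)\neq\mathrm{sat}(L)z$, and the oscillation seminorm need \emph{not} decrease at any given instant --- a maximizing node may have all of its in-neighbors nearly as large, giving zero instantaneous contraction --- so one must wait until the low region propagates across the $\delta$-graph before decrease is realized, and then verify carefully that a sufficiently small disturbance, which both drifts that low region upward and can momentarily pull a node out of saturation, cannot derail the propagation. The secondary, more bookkeeping-type difficulty is to choose the thresholds $d_{\max}$, $\rho$, $c_0$ and the various window lengths mutually consistently, so that the annulus patch and the linear-regime bootstrap interlock.
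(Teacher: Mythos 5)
Your overall architecture (drive the state into the saturation-free region in finite time, then conclude with the linear ISS estimate already derived for $\dot z=-Lz+d$) matches the paper's, but the step that does the real work in your plan is not proven, and in the form you state it it is not even true. You assert that, as long as the spread $\vertiii{z}=\max_i z_i-\min_i z_i$ exceeds a threshold $\rho=O(\ell/\delta)$, it ``is forced to decrease at a rate bounded below by some $\kappa(\ell,\delta)>0$ up to an $O(\|d\|_\infty)$ correction,'' by appeal to Moreau's contraction argument ``adapted to account for the saturation and the disturbance.'' Your own local engine only bites at an \emph{extremal} node whose in-neighbor lags by at least $1/\delta$; if the spread is large because of a big gap somewhere in the interior of a directed path while the current maximizer and minimizer have in-neighbors at nearly their own level, then $D^+\vertiii{z}$ can be $0$ (or merely $\leq 2\|d\|_\infty$) at that instant, so there is no instantaneous rate $\kappa$ --- as you yourself concede in your final paragraph. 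What is actually needed is a finite-time, disturbance-robust propagation estimate (decrease of the spread over windows whose length depends on $\ell$, $\delta$ and the saturation level, uniformly over initial conditions), and the cited Moreau theorem does not supply it: it concerns uniform exponential consensus for linear time-varying dynamics and does not cover a saturated, persistently disturbed vector field where the pull of each neighbor is capped at $1$. Flagging this as ``the main obstacle'' and leaving it to be verified means the central lemma of your proof is missing; the subsequent annulus patch (unperturbed convergence from the saturated-consensus literature plus a compactness/continuity-in-$d$ argument) is also only sketched, but it is a standard robustness argument and would likely go through once phase one is secured.

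For comparison, the paper closes exactly this gap by a different mechanism: it works in the disagreement coordinates $e=Lz$ and shows, by induction along directed paths emanating from the root, that each agent's disagreement enters and stays in the linear regime $|e_i|<1$ in finite time once its upstream neighbor has, with an explicit smallness condition on $d$ (the right-hand side of $\dot e_{p_{k+1}}$ is strictly negative whenever $e_{p_{k+1}}\geq 1$, and the linear-regime equilibrium bound is strictly below $1$); for the general spanning-tree case the root strongly connected component is handled first with the Lyapunov function $V=\tfrac12\tilde z^\top \tilde L^\top W\tilde z$ built from the left Perron vector, and only then does the path induction propagate outward. If you want to keep your oscillation-seminorm route, you would essentially have to reconstruct an analogue of this sequential, per-hop finite-time argument (or a genuine saturated-Moreau lemma with disturbances); otherwise the cleanest fix is to switch to the paper's $e=Lz$ coordinates, where the saturation threshold $1$ and the spanning-tree structure interact in a way that makes the induction explicit.
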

\noindent The proof is provided in Appendix~\ref{app:proof_saturation}. This result is applied to vehicular coordination in Section~\ref{sec:satcoord}.

\subsection{Time-varying linear dynamics}
Another class of systems that can be used within the compositional consensus framework is linear time-varying dynamics. The following proposition is a straightforward application of~\cite[Theorem~1]{Moreau2004stabconsensus}.%extension of

\begin{proposition}\label{prop:timevarying}
Consider the first-order consensus system
\[
\dot{z} = -L(t)z + d(t),
\]
where \( L(t) \) is a piecewise continuous Metzler matrix. Let \( A(t) = \int_t^{t+T} L(\tau)\, \mathrm{d}\tau \). Suppose there exists \( \delta > 0 \), a fixed node \( k \), and a time \( T > 0 \) such that, for every \( t \), the \( \delta \)-digraph associated with \( A(t) \) contains a node \( k \) that is reachable from all other nodes.
Define the disagreement seminorm
\[
\vertiii{z} := \left\| \left(I - \frac{\1 \1^\top}{N} \right) z \right\|.
\]
Then, the solution satisfies
\[
\vertiii{z(t)} \leq \beta\left( \vertiii{z(0)}, t \right) + \gamma( \textstyle{\sup_{t \geq 0} }\|d(t)\| ),
\]
for some functions \( \beta \in \mathcal{KL} \) and \( \gamma \in \mathcal{K} \).
\end{proposition}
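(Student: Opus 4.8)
The plan is to reduce the perturbed system to the unperturbed time-varying consensus system studied in~\cite[Theorem~1]{Moreau2004stabconsensus} and then invoke that result, together with a superposition argument that separates the effect of the initial condition from the effect of the disturbance. First I would observe that $-L(t)$ being Metzler with zero row sums means $\dot z = -L(t)z$ is a consensus dynamics of exactly the type covered by Moreau's theorem, and the $\delta$-digraph reachability hypothesis on $A(t)=\int_t^{t+T}L(\tau)\,\mathrm d\tau$ is precisely the uniform connectivity condition needed there. Consequently the transition matrix $\Phi(t,s)$ of $\dot z=-L(t)z$ is row-stochastic, nonnegative, and satisfies a uniform contraction of the disagreement: there exist $C\ge 1$, $\lambda>0$ with $\vertiii{\Phi(t,s)v}\le C e^{-\lambda(t-s)}\vertiii{v}$ for all $t\ge s$, where $\vertiii{\cdot}$ is the stated projection seminorm. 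The key structural fact making the seminorm behave well is that $\Phi(t,s)\1=\1$, so $\Phi(t,s)$ commutes with the averaging projection $I-\tfrac{\1\1^\top}{N}$ in the sense needed to push the seminorm through.

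Next I would write the variation-of-constants formula $z(t)=\Phi(t,0)z(0)+\int_0^t \Phi(t,\tau)d(\tau)\,\mathrm d\tau$, apply the seminorm $\vertiii{\cdot}$, and use the triangle inequality to split into a homogeneous term and a forced term. The homogeneous term is bounded by $C e^{-\lambda t}\vertiii{z(0)}$, which defines a function $\beta(\vertiii{z(0)},t):=Ce^{-\lambda t}\vertiii{z(0)}\in\mathcal{KL}$. For the forced term I would bound $\vertiii{\Phi(t,\tau)d(\tau)}\le C e^{-\lambda(t-\tau)}\vertiii{d(\tau)}\le Ce^{-\lambda(t-\tau)}\|d(\tau)\|$ (using $\vertiii{v}\le\|v\|$ since the projection has unit norm), so $\int_0^t Ce^{-\lambda(t-\tau)}\|d(\tau)\|\,\mathrm d\tau\le \tfrac{C}{\lambda}\sup_{\tau\ge0}\|d(\tau)\|=:\gamma(\sup_{\tau\ge0}\|d(\tau)\|)\in\mathcal K$. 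Adding the two contributions gives exactly the claimed ISS-type estimate.

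The main obstacle is establishing the uniform exponential contraction $\vertiii{\Phi(t,s)v}\le Ce^{-\lambda(t-s)}\vertiii{v}$ in the right form from the hypothesis, which is really the content of~\cite[Theorem~1]{Moreau2004stabconsensus} but is stated there for the unforced trajectory rather than as a transition-matrix estimate; I would need to argue that the convergence-to-consensus statement for all initial conditions, combined with linearity and the uniformity of the connectivity assumption over all $t$ (not just $t=0$), upgrades to a uniform geometric bound on $\Phi$. A clean way is to apply Moreau's theorem on each window $[kT,(k+1)T]$ to get a fixed contraction factor $\rho<1$ per window independent of $k$, then chain these to obtain exponential decay; the per-window contraction is where the $\delta$-digraph condition enters. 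A minor additional point is to note that, unlike the saturated case, here the bound holds for \emph{all} bounded $d$ with no smallness requirement, because the dynamics are linear; I would remark on this. Existence, uniqueness, and absolute continuity of $z$ follow from piecewise continuity of $L(t)$ and $d(t)$ via Carathéodory, as in the other results.
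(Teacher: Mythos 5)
Your proposal follows essentially the same route as the paper's proof: invoke Moreau's Theorem~1 for uniform exponential stability of the unperturbed consensus dynamics, write the forced solution via variation of constants, push the projection seminorm through the transition operator (which preserves the consensus subspace since $\Phi(t,s)\1=\1$), and close with the exponential convolution estimate to identify $\beta\in\mathcal{KL}$ and $\gamma\in\mathcal{K}$. Your extra care in upgrading Moreau's trajectory-wise convergence statement to a uniform transition-matrix contraction (per-window contraction factor chained over windows) simply makes explicit a step the paper asserts directly, and your remarks on the absence of a smallness condition on $d$ and on Carath\'eodory existence are consistent with the paper.
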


\begin{proof}
The cited theorem \cite[Theorem~1]{Moreau2004stabconsensus} establishes that the consensus equilibrium of the unperturbed system (i.e., \( d(t) = 0 \)) is uniformly exponentially stable, meaning
\[
\vertiii{z(t)} \leq M e^{-\alpha t} \vertiii{z(0)}
\]
for some constants \( M > 0 \) and \( \alpha > 0 \). Let \( \Phi(t,t_0) \) denote the state transition operator of the unforced system \( \dot{z} = -L(t)z \). Then, by the variation of constants formula (see, e.g.,~\cite[Chapter~2]{hinrichsen2005mathematical}), the solution to the forced system is
\[
z(t) = \Phi(t,t_0)z(t_0) + \int_{t_0}^t \Phi(t,\tau) d(\tau)\, \mathrm{d}\tau.
\]

Applying the projection operator \( I - \frac{\1 \1^\top}{N} \) to both sides, and using the fact that \( \Phi(t,t_0) \) preserves the consensus subspace, we get
\[
\vertiii{z(t)} \leq M e^{-\alpha (t - t_0)} \vertiii{z(t_0)} + \int_{t_0}^t M e^{-\alpha(t - \tau)} \|d(\tau)\|\, \mathrm{d}\tau.
\]
Using the standard exponential convolution estimate, we obtain
\[
\vertiii{z(t)} \leq M e^{-\alpha (t - t_0)} \vertiii{z(t_0)} + \frac{M}{\alpha} \sup_{t \geq 0} \|d(t)\|,
\]
which is an ISS-type bound of the desired form.
\end{proof}

For \( \Lc_k(z,t) = L_k(t)z \), it is straightforward to verify Assumptions~\ref{ass:kislipschitz}--\ref{ass:kislocallyISS}. What remains is to establish the smoothness condition in Assumption~\ref{ass:kiscont}. Note that
\begin{align*}
\partial_{t^j} (L_k(t)z) &= L_k^{(j)}(t)z,\\
\partial_{t^j} \partial_z (L_k(t)z) &= L_k^{(j)}(t), \\
\partial_{t^j} \partial_z^2 (L_k(t)z) &= 0.
\end{align*}
This shows that \( \Lc_k \in C^n \) if and only if \( L_k(t) \) is \( n \)-times continuously differentiable.

Furthermore, applying the product rule yields
\[
\dtk{j}(L_k(t)z(t)) = \sum_{i=0}^j \binom{j}{i} L_k^{(i)}(t)z^{(j-i)},
\]
which can be uniformly bounded in terms of \( \max_{0\leq i\leq j}\left\{ \|z^{(i)}\| \right\} \) provided that \( \|L_k^{(i)}(t)\| \leq M \) for all \( 0 \leq i \leq j \) and some constant \( M > 0 \).

\begin{rem}
Similar to the argument above, one may also apply~\cite[Lemma~4.6]{jour:khalil2002nonlinear} to the system \( \dot{x} = -\Lc(z,t) + d(t) \) to establish that uniform exponential stability implies ISS, provided that \( \Lc(z,t) \) is continuous.
\end{rem}

\subsection{Time-delayed consensus}
As a final case, we consider consensus protocols with time delays, modeled by functional differential equations. These systems have been  
thoroughly examined in, e.g.,~\cite{Moreau2004stabconsensus, jour:Lu2017consensusdelays}, which establish sufficient conditions for asymptotic consensus in the presence of bounded communication delays. Other related works recently studying delayed second- and high-order consensus protocols are \cite{Gao2023secondorderdelay,li2024secondvaryinganddelays,TRINDADE2025delaymargins}.

Consider, as in~\cite[Lemma~3.1]{jour:Lu2017consensusdelays}, a system of the form
%In particular,~\cite[Lemma~3.1]{jour:Lu2017consensusdelays} considers systems of the form
\begin{equation}\label{eq:delayed_consensus}
    \dot{z}_i(t) = -\sum_{j \in \Nc_i} w_{i,j}(t) \left(z_i(t) - z_j(t - \tau_{i,j}(t)) \right),
\end{equation}
where \( w_{i,j}(t) \geq 0 \), and each delay \( \tau_{i,j}(t) \) is piecewise continuous and bounded: \( \tau_{i,j}(t) \leq \tau_{\max} < \infty \). Under the assumption that the time-integrated adjacency matrix 
\[
A(t) := \int_t^{t+T} W(\tau)\, \mathrm{d}\tau
\]
induces a \( \delta \)-digraph for some fixed $T>0$ that contains a fixed root node $k$ of a directed spanning tree for all $t\geq 0$, the system is known to reach consensus exponentially.

To express~\eqref{eq:delayed_consensus} compactly, we write it in functional form as
\begin{equation} \label{eq:delayed_consensus_vector}   
\dot{z}(t) = -D(t)z(t) + \mathcal{W}(z_t, t),
\end{equation}
where $D(t)$ is a diagonal matrix and \( z_t(\theta) := z(t + \theta) \) for \( \theta \in [-\tau_{\max}, 0] \), following standard notation in~\cite{jour:hale2013introductionfunctional}.

In this form, the operator \( \Lc_k(z_t,t) = -D(t)z(t) + \mathcal{W}(z_t,t) \) generally fails to satisfy Assumption~\ref{ass:kisinvariant}, as we show by counterexample in Appendix~\ref{app:counterexample}. Therefore, such a delayed operator cannot be used to define %as an intermediate component 
\( \Lc_k \) for \( k < n \). 
However, the delayed operator may be used for the outermost function in the composition~\( \Lc_n \), %as the final component \( \Lc_n \), 
since the unperturbed system~\eqref{eq:delayed_consensus}, for given delay functions, admits a unique solution. Our proof of Theorem~\ref{thrm:asymptoticconsensus} extends to this setting by interpreting~\( \xi_n \) as a continuous input signal to \( \xi_{n-1} \) in~\eqref{eq:statespace}. Provided that the remaining operators \( \Lc_k \), for \( k \neq n \), satisfy the assumptions of the theorem, the full solution \( z \) and its first \( n-1 \) derivatives are well defined and converge to consensus through the same inductive argument. A comprehensive treatment of functional differential equations can be found in~\cite{jour:hale2013introductionfunctional}.

\section{Case Studies}
\label{sec:applications}
We now explore applications of the compositional consensus framework developed in the previous sections. In particular, we focus on second-order formation control, where each agent is modeled as a double-integrator:
\begin{equation}
    \ddot{x} = u(x,t),
\end{equation}
with \( x(t) \in \mathbb{R}^N \). A general compositional consensus-based control law for this setting takes the form
\begin{equation}\label{eq:second-controller}
    u(x,t) = -\Lc_2(\dot{x} + \Lc_1(x,t), t) - \dtk{} \Lc_1(x,t).
\end{equation}
%This can be contrasted with the linear serial consensus controller,
This will be contrasted with a conventional second-order consensus protocol
\begin{equation}\label{eq:2ndconventional}
u_{\mathrm{conv}}(x,t) = -\Lc_{\mathrm{vel}} (\dot{x}(t)) - \Lc_{\mathrm{pos}} (x(t)).
\end{equation}
and what we will term a \emph{na\"ive} serial consensus:
\begin{equation}\label{eq:2ndserial}
    u_{\mathrm{ser}}(x,t) = -(\Lc_2 + \Lc_1)(\dot{x}(t)) - \Lc_2\circ \Lc_1 (x(t)),
\end{equation}

\begin{rem}
In many applications, the objective is to steer a group of agents into a fixed formation and maintain a constant collective velocity. This can be achieved by introducing a desired relative position vector \( d_{\mathrm{ref}} \in \RN \) and a reference velocity \( v_{\mathrm{ref}} \in \R \). To enforce the desired formation, one can work in the transformed coordinates \( \tilde{x} = x - d_{\mathrm{ref}} \). This transformation does not alter the system dynamics and thus preserves the control structure. The agents will then asymptotically coordinate in the frame \( \tilde{x} \), implying that \( |x_i(t) - x_j(t)| \to |d_i - d_j| \) as \( t \to \infty \). To ensure correct velocity tracking, a leader-follower structure may be employed, where the agents synchronize with a designated leader moving at velocity \( v_{\mathrm{ref}} \).
\end{rem}

\subsection{Time-varying graph Laplacians}
We now consider an application where both operators in the compositional consensus protocol are defined using time-varying Laplacians: \( \Lc_1(x,t) = L_1(t)x \) and \( \Lc_2(x,t) = L_2(t)x \). The resulting closed-loop system becomes
\begin{equation}\label{eq:2ndtimevarying}
    \ddot{x} = -L_2(t)\left(\dot{x} + L_1(t)x\right) - \dot{L}_1(t)x - L_1(t)\dot{x}.
\end{equation}
As established in Proposition~\ref{prop:timevarying}, if \( L_1(t) \) and \( L_2(t) \) are piecewise continuous and sufficiently connected over time, and if \( L_1(\cdot) \in C^0 \), then they can be used to construct a compositional consensus protocol that guarantees second-order consensus, via Theorem~\ref{thrm:asymptoticconsensus}.

Notably, such guarantees are generally not available for the corresponding na\"ive serial and conventional consensus protocols in~\eqref{eq:2ndserial} and~\eqref{eq:2ndconventional}, when the Laplacians are time-varying. We consider the following example of a string formation with time-varying connectivity.

\begin{example}
We define \( L_{\mathrm{path}} \in \mathbb{R}^{N \times N} \) as the Laplacian of a directed path graph with the following structure:
\[
L_{\mathrm{path}} = 
\bmat{
0 & 0 & &  \\
-1 & 1  \\
& \ddots & \ddots & \\
& & -1 & 1  \\
}.
\]
Let \( L_1(t) = L_2(t) = D(t)L_{\mathrm{path}} \), where \( D(t) \) is a time-varying diagonal matrix defined as
\[
[D(t)]_{i,i} = \max\left\{ \sin(\omega_i t + \phi_i), 0 \right\},
\]
with individual frequency \( \omega_i \neq 0 \) and phase \( \phi_i \in [0, 2\pi) \). This choice satisfies all conditions for Theorem~\ref{thrm:asymptoticconsensus} and ensures that both \( L_1(t) \) and \( L_2(t) \) are connected over time.

Figure~\ref{fig:compare_serial_conv} shows a simulation of a second-order vehicle formation with \( N = 20 \) agents under this protocol, with randomly chosen frequencies and phases. Despite the complexity of the system and the time-varying graph structure, the compositional protocol successfully coordinates the agents. It achieves second-order consensus, with the agents converging to their desired relative positions.

For comparison, we simulate the same formation using the time-varying versions of the conventional and na\"ive serial consensus controllers. The conventional controller is defined~as
\[
u_{\mathrm{conv}}(x,t) = -L_1(t)\dot{x} - L_2(t)x,
\]
and the result is shown in Figure~\ref{fig:timevaryingconv}. The na\"ive serial consensus protocol is given by
\[
u_{\mathrm{ser}}(x,t) = -(L_2(t) + L_1(t))\dot{x} - L_2(t)L_1(t)x,
\]
with the corresponding result shown in Figure~\ref{fig:timevaryingserial}.
Both alternative controllers exhibit poor transient performance: the na\"ive serial consensus has a slow and oscillatory convergence to the reference trajectory, 
while the conventional controller produces extreme oscillations.
\end{example}

\begin{figure*}[!t]
    \centering
    \begin{subfigure}{.32\linewidth}
        \centering
        \includegraphics[width=\linewidth]{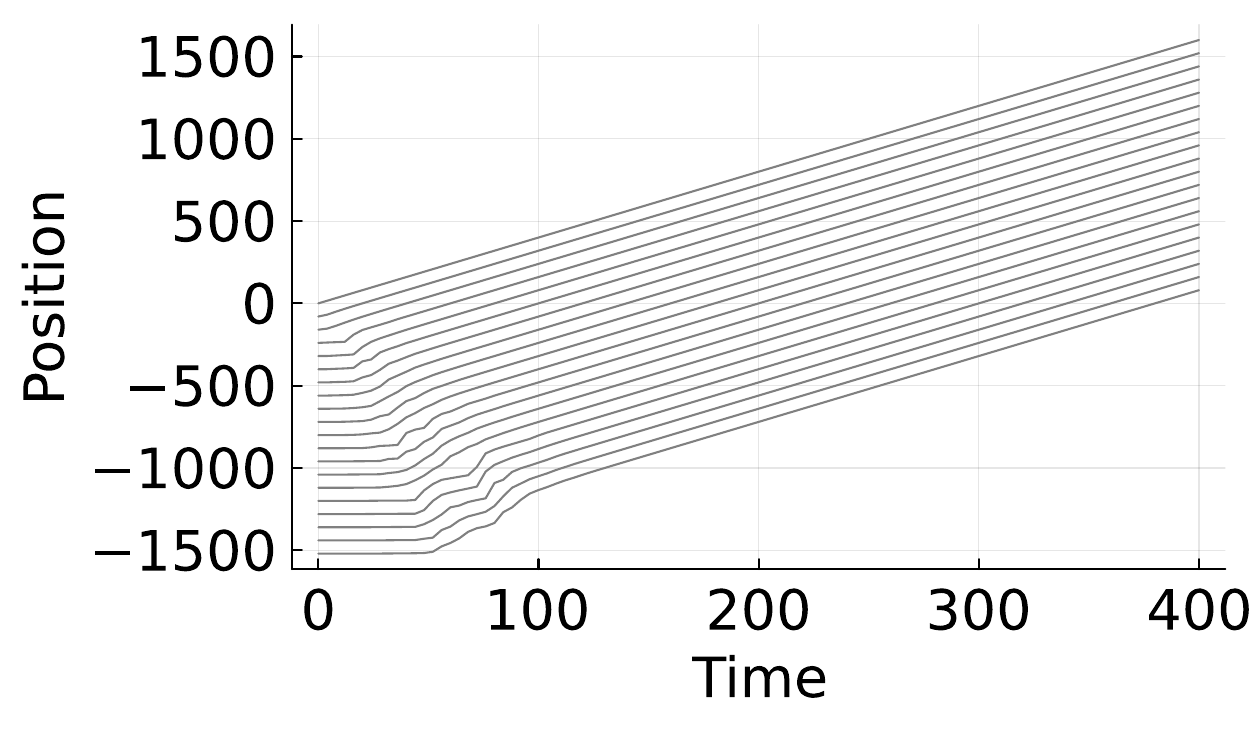}
        \caption{Compositional consensus}
        \label{fig:timevaryingcompos}
    \end{subfigure}
    \hfill
        \begin{subfigure}{.32\linewidth}
        \centering
        \includegraphics[width=\linewidth]{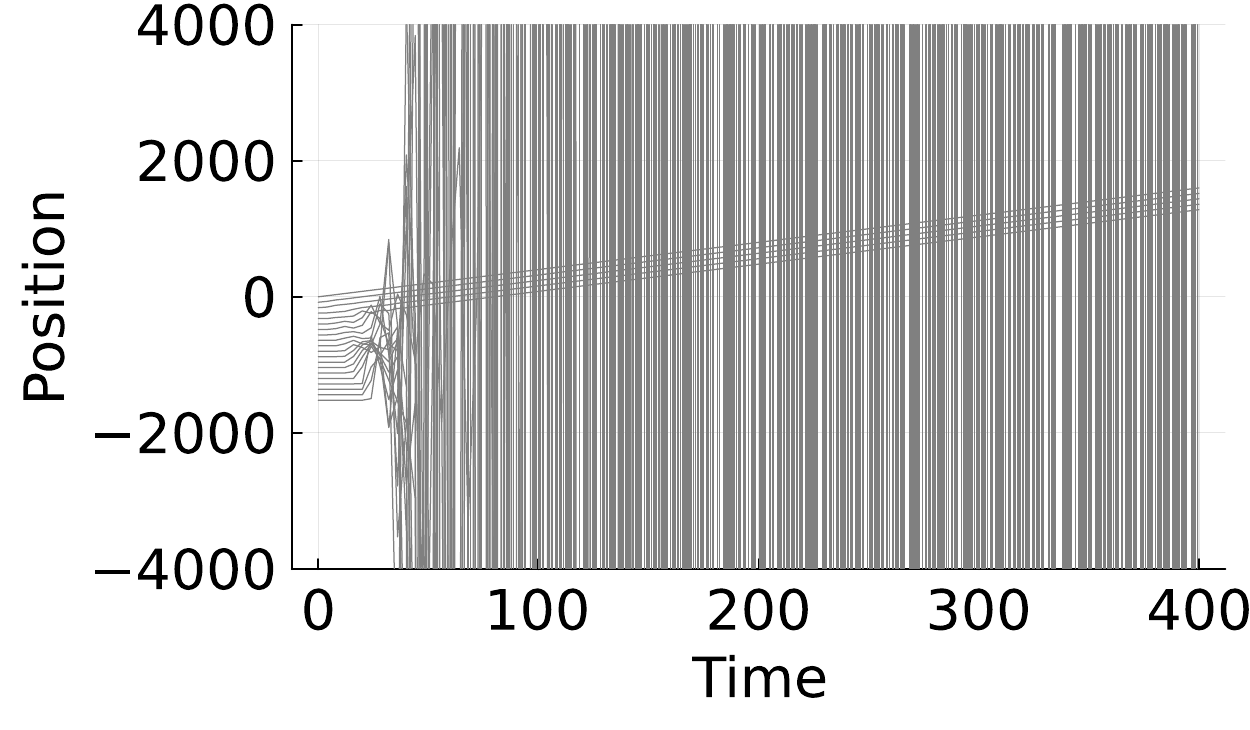}
        \caption{Conventional consensus}
        \label{fig:timevaryingconv}
    \end{subfigure}
       \hfill
    \begin{subfigure}{.32\linewidth}
        \centering
        \includegraphics[width=\linewidth]{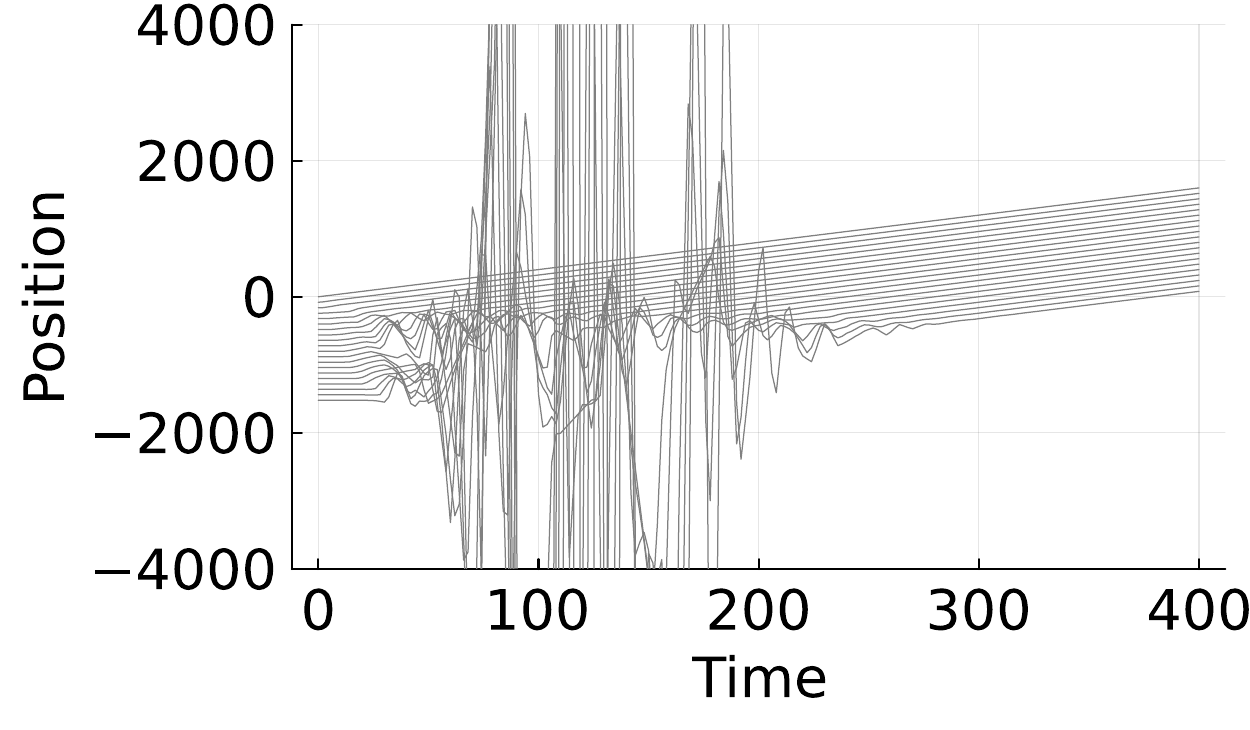}
        \caption{Na\"ive serial consensus}
        \label{fig:timevaryingserial}
    \end{subfigure}
    \caption{Simulation results using time-varying graph Laplacians whose connection strengths vary sinusoidally. The compositional controller achieves stable second-order consensus. The {na\"ive} serial consensus exhibits a significant transient but eventually converges, while the conventional consensus seems to be truly unstable.}
    \label{fig:compare_serial_conv}
\end{figure*}

\subsection{Saturated coordination}
\label{sec:satcoord}
Time-varying consensus protocols represent just one class of systems that benefit from compositional consensus. Another important and challenging class involves nonlinear protocols, particularly those incorporating input saturation. As established in Proposition~\ref{prop:saturation}, the compositional consensus framework can accommodate operators such as
\[
\Lc_1(x,t) = \mathrm{sat}(L_1 x), \quad \Lc_2(x,t) = \mathrm{sat}(L_2 x),
\]
provided that both Laplacians contain a directed spanning tree. By contrast, no general guarantees exist for the conventional or even the na\"ive serial consensus protocols when such nonlinearities are present. Consider the following example.

\begin{example}
Consider formation control over a directed string network. That is, the case where \( \Lc_1(x,t) = \Lc_2(x,t) = \mathrm{sat}(L_{\mathrm{path}} x) \). The resulting control law becomes
\[
u(x,t) = -\mathrm{sat}\left( L_{\mathrm{path}}( \dot{x} +\mathrm{sat}(L_{\mathrm{path}} x)) \right) - \dtk{}  \mathrm{sat}(L_{\mathrm{path}} x).
\]
The formation is initialized with a nonzero positional error to highlight the effect of saturation. The simulation results are shown in Figure~\ref{fig:saturatedcoord}. Despite the nonlinearities, the compositional controller (Figure~\ref{fig:satcompos}) achieves a smooth transition to second-order consensus.

For comparison, we simulate the same system under saturated versions of the na\"ive serial and conventional consensus controllers. That is
\[
    u_\mathrm{ser}(x,t)=-2\mathrm{sat}(L_\mathrm{path}\dot{x})-\mathrm{sat}(L_\mathrm{path}\mathrm{sat(}L_\mathrm{path} x)),
\]
and
\[
    u_\mathrm{conv}(x,t)=-\mathrm{sat}(L_\mathrm{path}\dot{x}) -\mathrm{sat}(L_\mathrm{path}x).
\]
The results are shown in Figures~\ref{fig:satserial} and~\ref{fig:satconv}, respectively. The na\"ive serial consensus shows similar but slightly slower convergence than the compositional consensus. The saturated conventional consensus shows an indication of string instability. We have conducted larger simulations that verify this indication. 
\end{example}

\begin{figure*}[]
    \centering
    \begin{subfigure}{.31\linewidth}
        \centering
        \includegraphics[width=\linewidth]{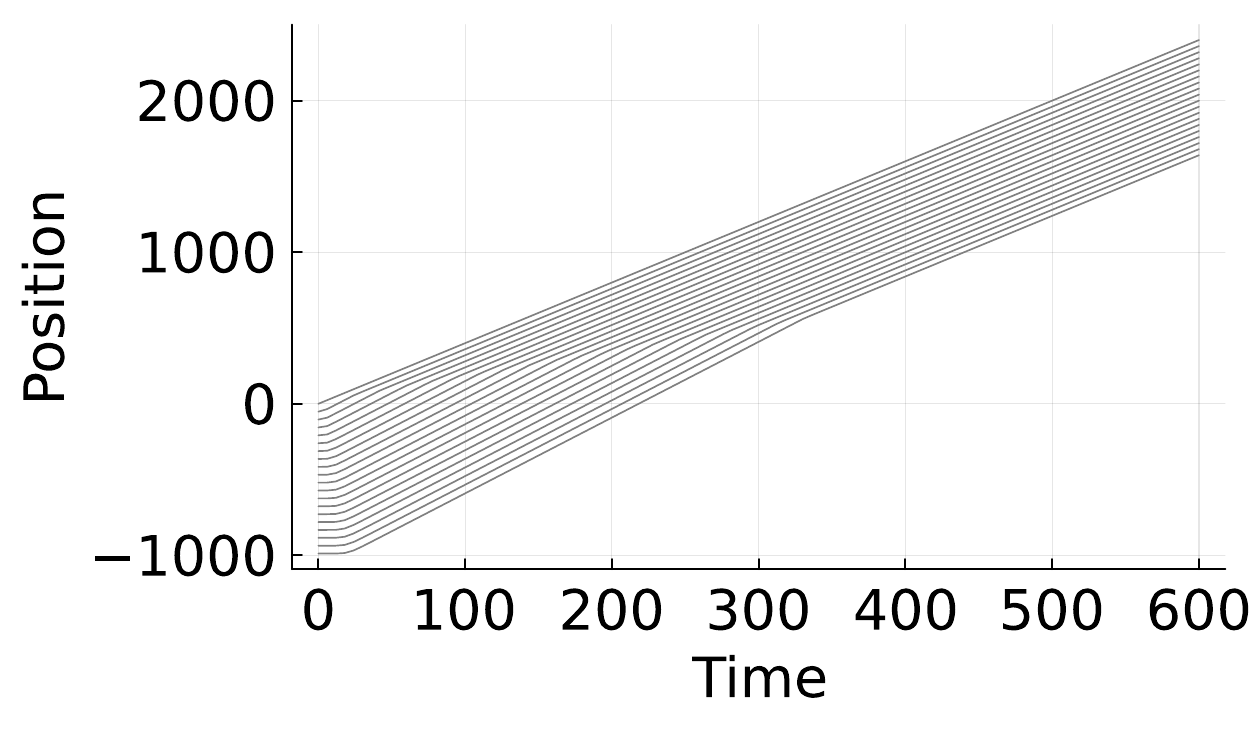}
        \caption{Compositional consensus}
        \label{fig:satcompos}
    \end{subfigure}
    \hfill
    \begin{subfigure}{.31\linewidth}
        \centering
        \includegraphics[width=\linewidth]{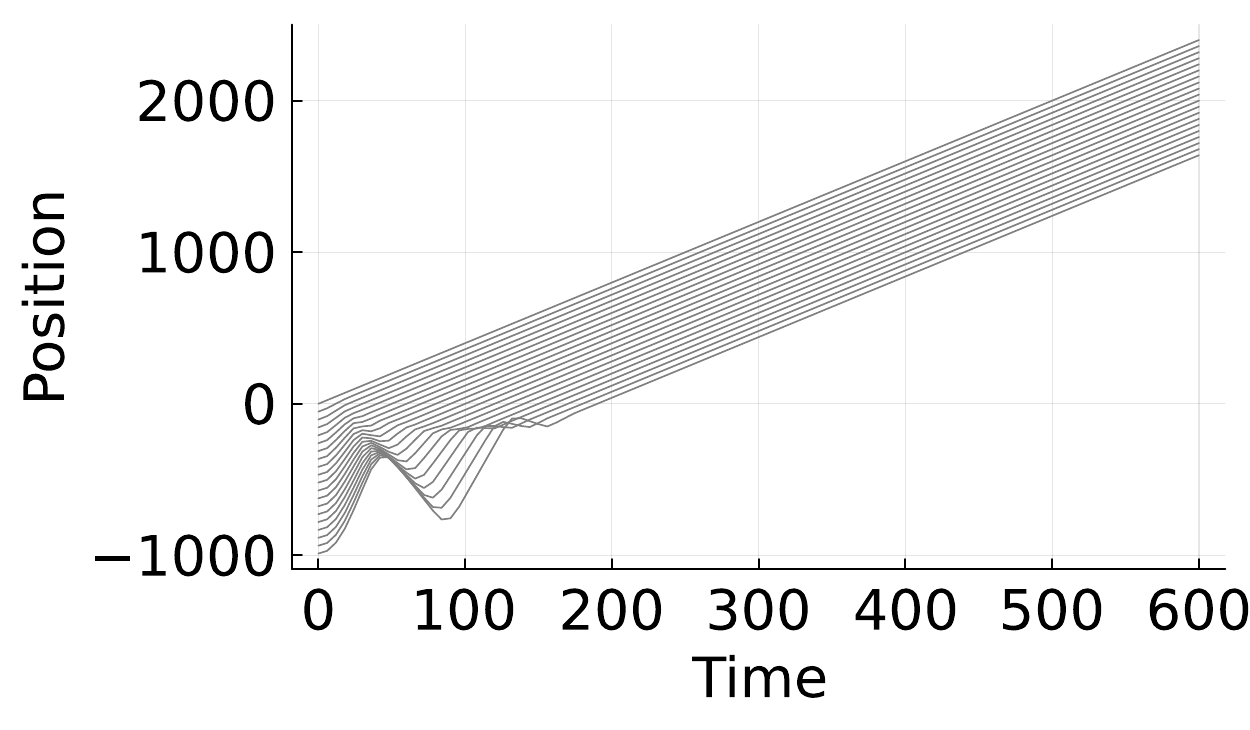}
        \caption{Conventional consensus}
        \label{fig:satconv}
    \end{subfigure}
    \hfill
    \begin{subfigure}{.31\linewidth}
        \centering
        \includegraphics[width=\linewidth]{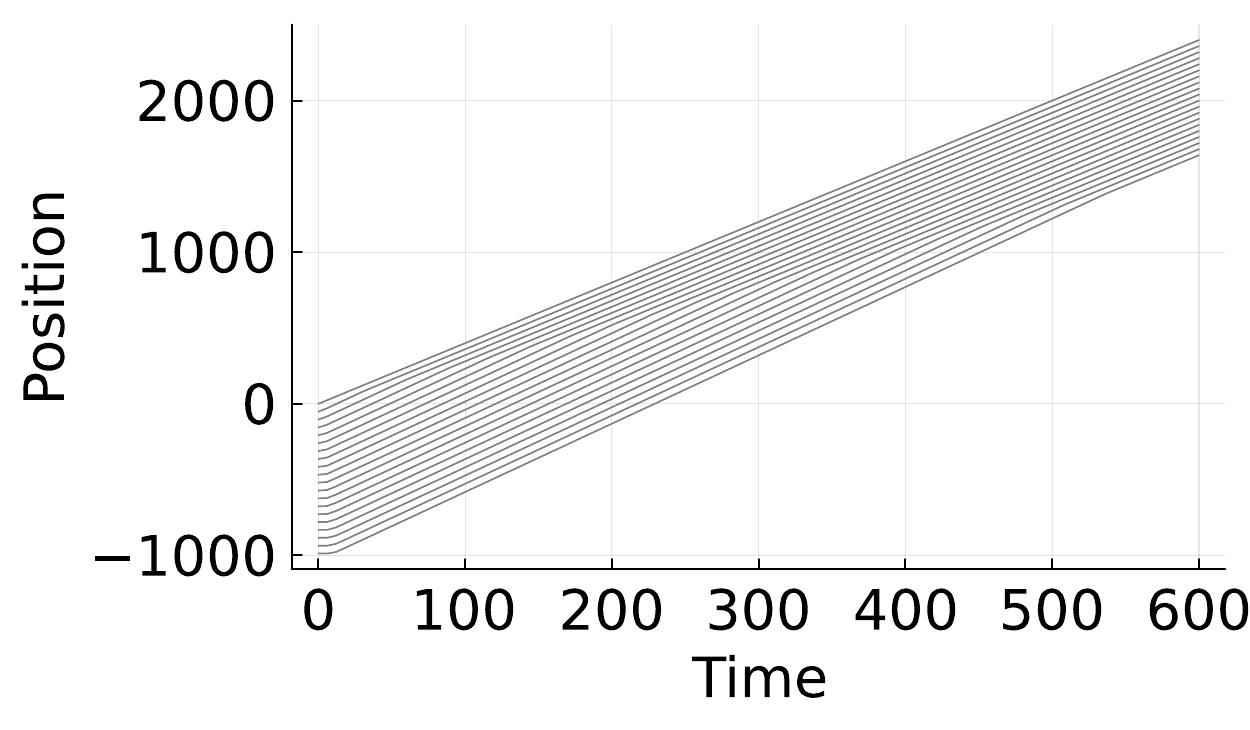}
        \caption{Na\"ive serial consensus}
        \label{fig:satserial}
    \end{subfigure}
    \caption{Simulation of compositional, conventional, and na\"ive serial consensus under saturated control inputs. The compositional and na\"ive serial consensus protocols achieve smooth convergence, while conventional consensus exhibits an undesired transient indicative of string~instability. }%indicate potential string instability.}
    \label{fig:saturatedcoord}
\end{figure*}

\subsection{Delayed absolute feedback}
We conclude with an application involving delayed absolute feedback, such as GPS-based velocity measurements. In other words, we now consider a case where Assumption~\ref{ass:kisinvariant} is relaxed. %Other related works that have recently studied delayed second- and high-order consensus protocols are \cite{Gao2023secondorderdelay,li2024secondvaryinganddelays,TRINDADE2025delaymargins}. 
In vehicle platoons, absolute feedback has been proposed to improve performance; however, in practice, it would typically be %it is often 
received aperiodically and with uncertain delays. This example investigates such a scenario.

\begin{example}
We consider a delayed consensus protocol based on absolute measurements with static coupling weights. Written in individual-agent form, the dynamics are
\[
\dot{x}_i(t) = -d_i\left(x_i(t) - x_{\mathrm{GPS}}(t - \tau_i(t))\right),
\]
which can be compactly expressed as
\[
\Lc_2(x_t, t) = D x(t) - D_{\tau(t)}(\mathbf{1} x_{t,\mathrm{GPS}}),
\]
where \( x_t(\theta) := x(t + \theta) \) for \( \theta \in [-\tau_{\max}, 0] \), and \( D \) is a diagonal matrix of feedback weights.

For the other operator, we use a standard linear time-invariant consensus protocol: \( \Lc_1(x,t) = L_{\mathrm{path}} x(t) \). The first row of \( L_{\mathrm{path}} \) is all zeros, modeling a virtual leader. The resulting compositional control law is
\[
u_{\mathrm{comp}}(x_t, t) = -D(\dot{x} + L_{\mathrm{path}} x) + D_{\tau(t)}(\mathbf{1} \dot{x}_{t,\mathrm{GPS}}) - L_{\mathrm{path}} \dot{x}.
\]
Defining \( e = L_{\mathrm{path}} x \), and rearranging terms, the individual-agent control becomes
\[
u_{i,\mathrm{comp}}(x_t, t) = -d_i \left( \dot{x}_i(t) - \dot{x}_{\mathrm{GPS}}(t - \tau_i(t)) \right) - d_i e_i(t) - \dot{e}_i(t).
\]
The last two terms correspond to standard relative feedback with local neighbors, while the first involves delayed absolute velocity feedback. This term can be rewritten as
\[
\left( \dot{x}_i(t) - \dot{x}_i(t - \tau_i(t)) \right) - \left( \dot{x}_{\mathrm{GPS}}(t - \tau_i(t)) - \dot{x}_i(t - \tau_i(t)) \right),
\]
which separates into two interpretable components:  
1) the change in the agent’s velocity since the last measurement, and  
2) a delayed relative velocity signal received from the GPS.

In practice, each agent stores a record of its past velocity and periodically receives delayed GPS-based velocity references. This allows the required feedback to be implemented despite asynchronous and uncertain communication delays.

We simulate a vehicle formation with \( d_i = 1 \) for all agents and delays \( \tau_i(t) \) sampled from a Poisson process with a mean inter-arrival time of 1 second. The compositional controller is compared to:  
1) a conventional consensus protocol with perfect (non-delayed) absolute velocity feedback, and 2) the same conventional consensus protocol subject to the same delays $\tau_i(t)$ as in the compositional case.

Figure~\ref{fig:delayedcoordination} shows that the compositional consensus protocol achieves smooth second-order coordination despite the delays. The conventional controller also performs well with ideal feedback, but its performance degrades significantly under delay, resulting in oscillatory behavior. 
% \emma{But the comparison is a little unfair: what if the conventional also makes use of delayed absolute \emph{position} feedback? No need to change, but let's be prepared to defend comparison? }
\end{example}

\begin{figure*}[]
    \centering
    \begin{subfigure}{.31\linewidth}
        \centering
        \includegraphics[width=\linewidth]{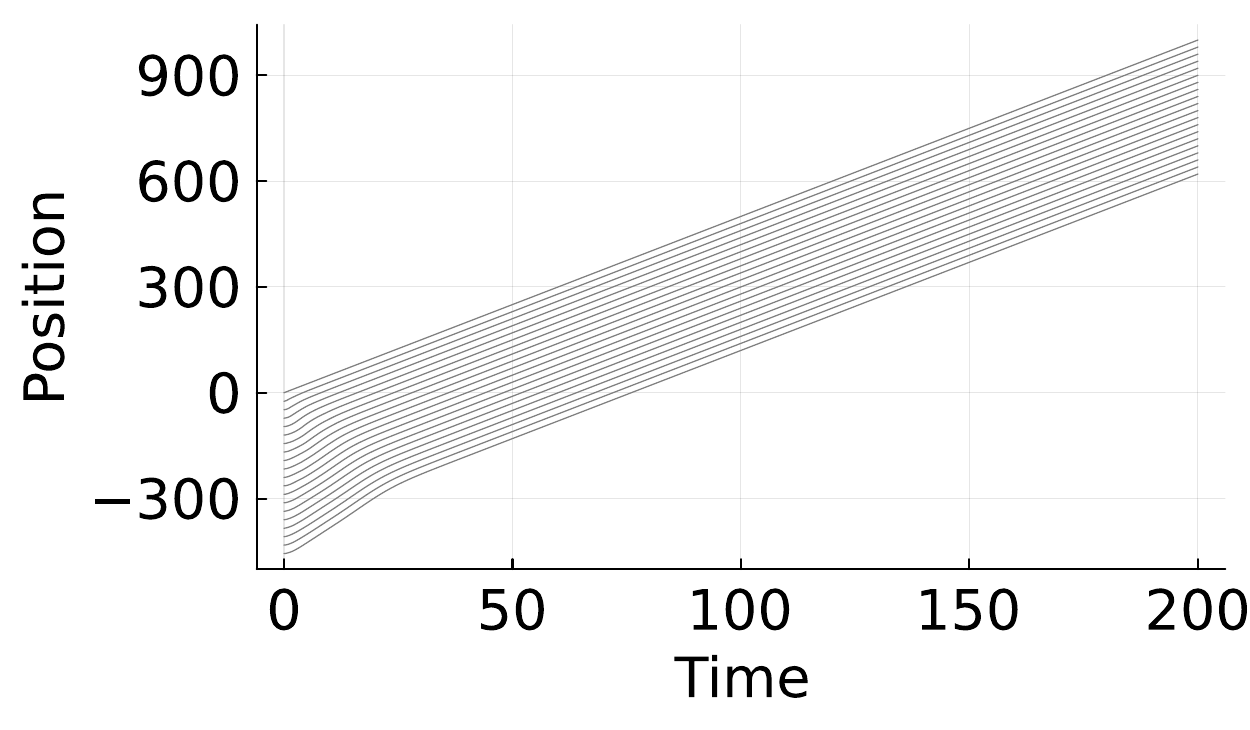}
        \caption{Compositional consensus with delayed velocity feedback}
        \label{fig:satcompos}
    \end{subfigure}
    \hfill
    \begin{subfigure}{.31\linewidth}
        \centering
        \includegraphics[width=\linewidth]{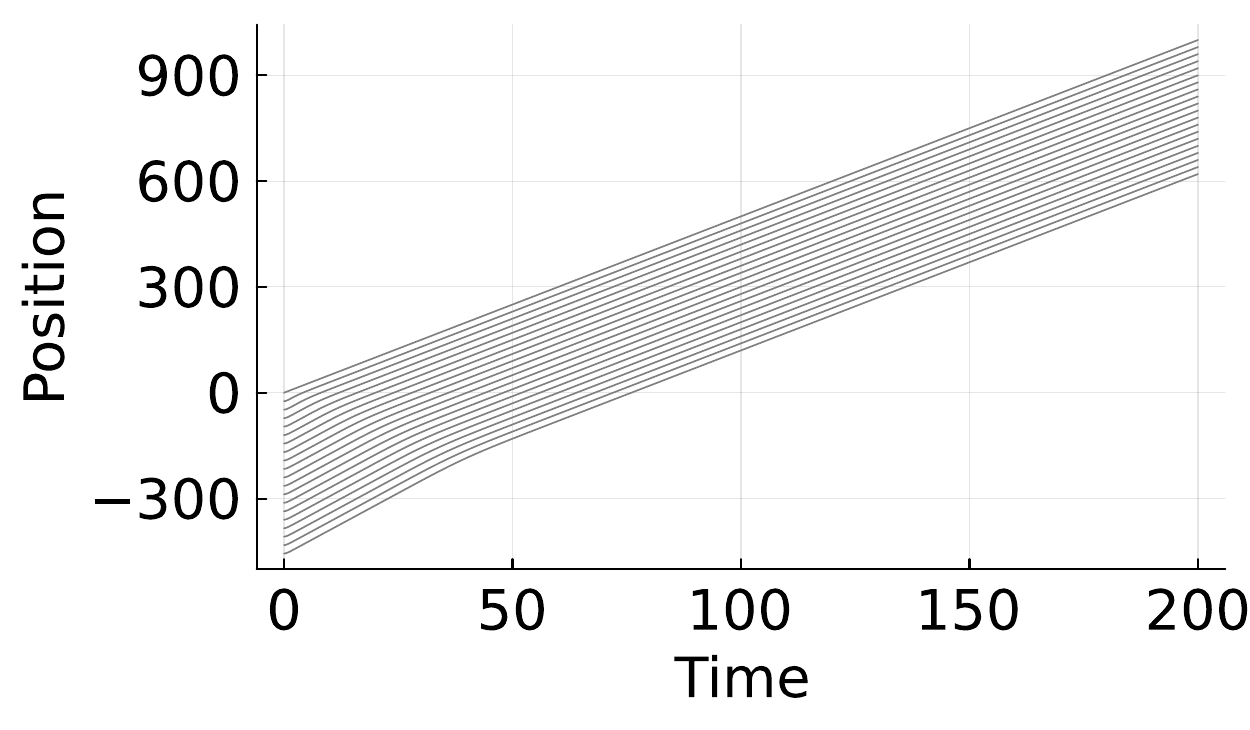}
        \caption{Conventional consensus with ideal velocity feedback}
        \label{fig:satserial}
    \end{subfigure}
    \hfill
    \begin{subfigure}{.31\linewidth}
        \centering
        \includegraphics[width=\linewidth]{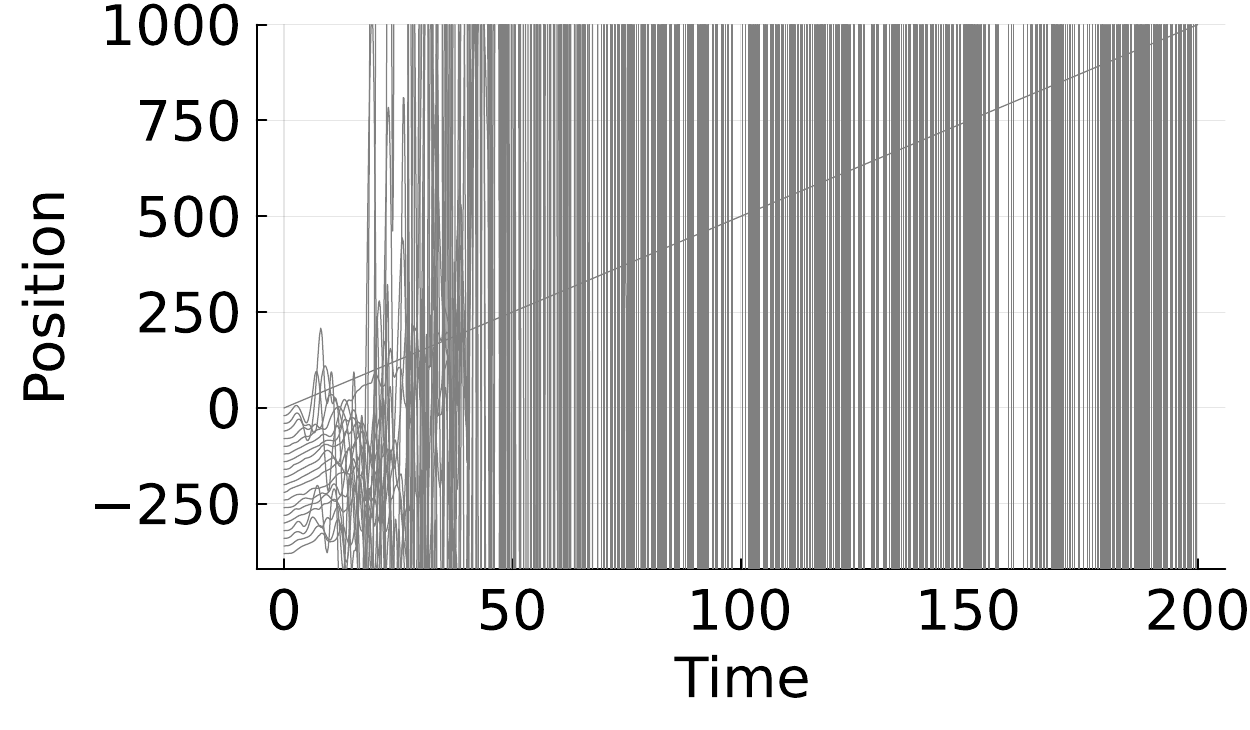}
        \caption{Conventional consensus with delayed velocity feedback}
        \label{fig:satconv}
    \end{subfigure}
    \caption{Simulation of second-order consensus under absolute velocity feedback with and without delay. The compositional consensus controller is robust to feedback delays. This is unlike the conventional controller, which shows oscillatory behavior under the same conditions.}
    \label{fig:delayedcoordination}
\end{figure*}

\section{Conclusions}\label{sec:conclusions}

In this work, we expanded the theory of high-order coordination by introducing and analyzing a general framework for \emph{compositional consensus}. This approach provides a flexible and modular design paradigm that accommodates practical challenges such as time-varying dynamics, nonlinearities, and communication delays. In particular, it allows us to build upon the rich literature on the convergence of first-order consensus under various non-ideal conditions and immediately apply them for higher-order formation control. We focused on second-order coordination and vehicular formations as motivating examples, but the framework is broadly applicable. Potential applications include frequency coordination in power systems, temperature regulation in district heating networks, and large-scale multi-agent systems such as drone swarms. Implementing the controller in general presumes signaling in an $n$-hop neighborhood, where $n$ is the order of the local integrator dynamics. We remark, though, that in the case $n = 2$, a nearest-neighbor implementation is also possible through a ``look-ahead and look-behind protocol,'' see~\cite{jour:hansson2023scalable}. 

The main theoretical contribution of our work is a set of sufficient conditions for achieving asymptotic coordination in the compositional setting, formally stated in Theorem~\ref{thrm:asymptoticconsensus}. These results extend the reach of classical consensus theory and offer tools for principled design in complex settings.

Future work may involve identifying necessary conditions for coordination and extending current string-stability and scalability results, which are already established for linear serial consensus, to nonlinear and time-varying compositional designs. In particular, to investigate performance guarantees that are uniform in network size.  Such extensions are essential for robust and scalable deployment in large coordinated systems.

\appendix
% \section{Proof of Technical Lemmas}\label{app:lemmaproofs}
\subsection{Proof of Lemma~\ref{lem:consensusequivalnce}}\label{app:conensusequivalenceproof}
To aid the presentation of the proof, we begin with the following supporting lemma.

\begin{lemma}\label{lem:consensus_invariance}
    Let $\Lc\in C^n$ and satisfy $\Lc(z+\1 a(t),t)$ for any integrable function $a(t)$, then, for any $k\leq n$ the following is true: $\dtk{k}\Lc(z,t)=B_k(z,\dot{z},\dots,z^{(k)},t)=B_k(z-\1 b_0(t),\dot{z}-\1 b_1(t),\dots,z^{(k)}-\1 b_k(t),t)$ for any integrable functions $b_1, b_2,\dots, b_k$.
\end{lemma}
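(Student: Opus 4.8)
The plan is to split the claim into two parts: first, that the total derivative $\dtk{k}\Lc(z(t),t)$ is a function $B_k$ of only the arguments $(z,\dot z,\dots,z^{(k)},t)$; second, that this $B_k$ is invariant under shifting each argument $z^{(i)}$ by $\1 b_i(t)$. The first part is pure chain rule, the second exploits the relative-feedback property of Assumption~\ref{ass:kisinvariant}.

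For the first part I would argue by induction on $k$. The base case is $\dtk{1}\Lc(z,t)=\partial_z\Lc(z,t)\,\dot z+\partial_t\Lc(z,t)=:B_1(z,\dot z,t)$, which is well-defined and of class $C^{n-1}$ in all its arguments since $\Lc\in C^n$. For the inductive step, assuming $\dtk{k}\Lc(z(t),t)=B_k(z,\dot z,\dots,z^{(k)},t)$ with $B_k\in C^{n-k}$, one more differentiation gives $\dtk{k+1}\Lc=\sum_{i=0}^{k}\partial_{z^{(i)}}B_k\cdot z^{(i+1)}+\partial_t B_k$, which depends only on $(z,\dot z,\dots,z^{(k+1)},t)$ and is of class $C^{n-k-1}$. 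Tracking the regularity $B_k\in C^{n-k}$ is exactly what guarantees that each successive differentiation is legitimate, and it is why the statement is restricted to $k\le n$; this is where the hypothesis $\Lc\in C^n$ is consumed. The explicit formula also shows that $B_k(v_0,\dots,v_k,t)$ genuinely factors through those arguments alone — its dependence on $v_0$ enters only through partials $\partial_z^{p}\partial_t^{q}\Lc(v_0,t)$ — so $B_k$ does not depend on which curve $z(\cdot)$ realizes a given jet, a fact I will use below.

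For the second part, the key device is to realize prescribed jets by polynomials. Fix a time $t_0$, arbitrary vectors $v_0,\dots,v_k\in\RN$, and arbitrary scalars $c_0,\dots,c_k\in\R$. Set $z(t)=\sum_{i=0}^{k}v_i(t-t_0)^i/i!$ and $a(t)=\sum_{i=0}^{k}c_i(t-t_0)^i/i!$, so $z^{(i)}(t_0)=v_i$ and $a^{(i)}(t_0)=c_i$. Both curves are smooth, and since $-a$ is an admissible (integrable, indeed polynomial) shift, Assumption~\ref{ass:kisinvariant} gives $\Lc(z(t)-\1 a(t),t)=\Lc(z(t),t)$ for all $t$. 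Differentiating this identity $k$ times, evaluating at $t_0$, and applying the first part to each side — to the curve $t\mapsto z(t)-\1 a(t)$ on the left, to $z$ on the right — yields $B_k(v_0-\1 c_0,\dots,v_k-\1 c_k,t_0)=B_k(v_0,\dots,v_k,t_0)$. Since $t_0$, the $v_i$, and the $c_i$ were arbitrary, this is precisely the claimed identity, applied pointwise at each $t$ with $c_i=b_i(t)$; because $B_k$ is continuous in its arguments, no difficulty arises from the $b_i$ being merely integrable (the identity then holds a.e., which is all that Lemma~\ref{lem:consensusequivalnce} requires).

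The main obstacle I anticipate is not conceptual but bookkeeping: setting up the induction so the regularity class of $B_k$ is propagated correctly (this both justifies repeated differentiation and pins down the range $k\le n$), and cleanly separating ``the function $B_k$'' from ``a curve realizing a jet'' so that the polynomial-substitution argument is rigorous. Everything else reduces to the chain rule together with the relative-feedback identity.
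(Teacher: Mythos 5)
Your proposal is correct, but it reaches the translation invariance by a different route than the paper. Both proofs share the first step: an induction via the chain rule showing that $\dtk{k}\Lc(z,t)$ is a well-defined function $B_k(z,\dot z,\dots,z^{(k)},t)$ of the jet alone (your explicit tracking of the regularity $B_k\in C^{n-k}$, which pins down why $k\le n$, is a welcome addition the paper leaves implicit). Where you diverge is in how the invariance of $B_k$ is obtained. The paper propagates invariance inductively: assuming $B_k$ is invariant under independent shifts, it notes that the partial derivatives $\partial_{z^{(j)}}B_k$ and $\partial_t B_k$ inherit this invariance, differentiates the shifted identity in time with the shift functions chosen as antiderivatives ($c_j(t)=\int_0^t b_{j+1}$, so $\dot c_j=b_{j+1}$), and then re-translates each partial-derivative term to land on the shifted form of $B_{k+1}$. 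You instead prove the invariance of each $B_k$ directly from the zeroth-order invariance of $\Lc$: realize an arbitrary jet $(v_0,\dots,v_k)$ and an arbitrary shift jet $(c_0,\dots,c_k)$ at a fixed $t_0$ by polynomial curves $z$ and $a$, differentiate the identity $\Lc(z-\1 a,t)=\Lc(z,t)$ $k$ times, and evaluate at $t_0$; since $B_k$ depends only on the jet, this yields $B_k(v_0-\1 c_0,\dots,v_k-\1 c_k,t_0)=B_k(v_0,\dots,v_k,t_0)$ pointwise, which is exactly the claim (and makes transparent why the shifts $b_i$ may be chosen independently of one another -- the point the paper handles via the antiderivative substitution). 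Your jet-realization argument avoids the need to argue invariance of the partial derivatives and the slightly delicate re-translation step, at the cost of the extra (but clearly justified) observation that $B_k$ factors through the jet; the paper's version stays inside a single induction and never needs to instantiate curves. One small remark: your closing caveat about the $b_i$ being merely integrable and the identity holding only a.e. is unnecessary -- since the invariance of $B_k$ you establish is a pointwise identity in its arguments, substituting $c_i=b_i(t)$ gives the claimed identity for every $t$, with no measurability issue.
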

\begin{proof}
    We prove this by induction. The base case is directly proven by $B_0(x)=\Lc(x,t)$. Now, suppose it is valid for $k\leq n-1$. Firstly, taking the partial derivatives for any $ j \leq k$: $\partial_{z^{(j)}} B_k(z,\dots,z^{(k)},t)=\partial_{z^{(j)}} B_k(z-\1 b_0(t),\dots,x^{(k)}-\1 b_k(t),t)$. This shows that all partial derivatives are invariant to arbitrary translation along the consensus, and the same argument also holds for the partial time derivative. Now, taking the time derivative of the consensus translated  equation results in:
    \begin{multline*}
        \dtk{k+1}\Lc(x,t)=\partial_t B_k(x-\1 c_0(t),\dots,x^{(k)}-\1 c_k(t),t)+\\ \sum_{j=0}^k \!\partial_{x^{(j)}} B_k(x-\1 c_0(t),\dots,x^{(k)}-\1 c_k(t),t) (x^{(j+1)}-\1\dot{c}_j(t))
    \end{multline*}
Let $c_j(t)=\int_0^tb_{j+1}(\tau)\mathrm{d}\tau$ and for each partial derivative term do the outlined translation to get
\begin{multline*}
        \dtk{k+1}\Lc(x,t)=\partial_t B_k(x-\1 b_0(t),\dots,x^{(k)}-\1 b_k(t),t)+\\ \sum_{j=0}^k\! \partial_{x^{(j)}}\! B_k(x-\1 b_0(t), \!... ,x^{(k)}\!\!-\1 b_k(t),t) (x^{(j+1)}\!\!-\1 b_{j+1}(t)).
    \end{multline*}
    This shows the sought translation invariance and concludes the proof.
\end{proof}

We now proceed to the proof of Lemma~\ref{lem:consensusequivalnce}.

\begin{proof}
    First, we establish that the initial condition of $\xi_k$ is uniquely determined by the initial condition $x$  and its first $n-1$ derivatives. By using the relation $x=\xi_1$ and \eqref{eq:statespace}, the following relation can be derived
    $$x^{(j)}=-\dtk{j-1}\Lc_1(\xi_1,t)-\dots-\Lc_j(\xi_j,t)+\xi_{j+1}.$$
    Due to the smoothness of $\Lc_k$, it is possible to expand the time derivatives in terms of the partial derivatives through the chain rule. The time derivatives can thus be expressed as
    $$\dtk{j}\Lc_k(\xi_k,t)=B_{k,j}(\xi_k,\dot{\xi}_k,\dots,\xi_k^{(j)},t).$$
    Since $\dot{\xi}_k=-\Lc_k(\xi_k,t)+\xi_{k+1}$, it is possible to reduce the derivative dependence recursively and to show that 
    % \emma{I felt I wanted a middle step here?}
    $$\dtk{j}\Lc_k(\xi_k,t)=\hat{B}_{k,j}(\xi_k,\xi_{k+1},\dots,\xi_{k+j},t).$$
    Applying this to the general case leads to 
    $$
        \xi_{j+1}=x^{(j)}+\sum_{k=1}^j \hat{B}_{k,j-k}(\xi_k,\dots,\xi_j,t), 
    $$
    for $j = 0,\ldots,n-1$. Evaluating this at $t=0$ shows that $\xi_{j+1}$ is uniquely determined by $x^{(j)}(0)$ and the initial conditions of $\xi_k(0)$ for $k\leq j$. This, together with $\xi_1(0)=x(0)$, can be used to prove that $\xi_k(0)$ is uniquely determined by the initial condition of $x$ and its derivatives through a simple induction hypothesis. An analogous proof can be made in the reverse direction and, therefore, is omitted.

    For the second part of the proof, we will show that consensus of the states $\xi_k$ implies that $x$ achieves $n\ts{th}$-order consensus. As induction hypothesis, assume that $\|\xi_k^{(j)}-\1 a_{k+j}(t)\|\to 0$, with the induction step taken in the $j$ direction. The base case follows from the assumption that $\xi_k$ all reach a consensus, that is, $\|\xi_k-\1 a_k(t) \|\to 0$.

    For the induction step consider the general expression for $\xi_k^{(j+1)}$ for $k+j\leq n-1$, which is
$$\xi_k^{(j+1)}=-\dtk{j}\Lc_k(\xi_k,t)+\dtk{j}\xi_{k+1}.$$
Using Lemma~\ref{lem:consensus_invariance}, this can be expressed in terms of the translated states
    $$\xi_k^{(j+1)}=-B_{k,j}(\xi_k-\1 a_k(t),\dots,\xi_k^{(j)}-\1 a_{k+j},t)+\xi_{k+1}^{(j)}.$$
    By the premise of the theorem, $\|B_{k,j}\|$ can be bounded by $\alpha_{k,j}\in \mathcal{K}$. Subtracting $\1 a_{k+j+1}$ on both sides, taking the norm, using the triangle inequality, and bounding using the class $\mathcal{K}$ function $\alpha_{k,j}$ leads to
\begin{multline*}
    \|\xi_k^{(j+1)}-\1 a_{k+j+1}\|\leq \|\xi_{k+1}^{(j)}-\1 a_{k+j+1}(t)\| +\\\alpha_{k,j}\left(\max\left\{\|\xi_k-\1 a_k(t)\|,\dots,\|\xi_k^{(j)}-\1 a_{k+j}\|\right\}\right)
    \end{multline*} 
    Now, taking the limits on both sides, using the induction hypothesis together with the continuity of $\alpha_{k,j}$ shows that $\lim_{t\to \infty}\|\xi_k^{(j+1)}-\1 a_{k+j+1}(t)\|=0$. Thus $\xi_k$ achieves an $(n-k+1)$\ts{th}-order consensus. Since $x(t)=\xi_1(t)$ it follows that $x$ achieves an $n$\ts{th}-order consensus.

    The other direction, that is, $x$ achieving $n\ts{th}$ order consensus implying that $\xi_k$ achieves consensus is conducted similarly. Now, the induction hypothesis is that $\xi_{k}^{(j)}-\1 a_{j+k}(t)$ where this will be proved by induction steps in $k$.
    First, using the relation of $x(t)=\xi_1(t)$ shows that $\xi_1$ achieves $n$\ts{th} order consensus. The $n$\ts{th} order consensus implies that $\|\xi^{(j)}-\1 a_j(t)\| \to 0$ for some functions $a_j(t)$. For the induction step, we consider the relation
$$\dot{\xi}_k=-\Lc_k(\xi_k,t) +\xi_{k+1}$$
This can be differentiated $j\leq n-k-1$ times, and be rearranged to
$$\xi_{k+1}^{(j)}-\1 a_{j+k+1}=\xi^{(j+1)}_k-\1 a_{j+k+1}+\dtk{j}\Lc_k(\xi_k,t).$$
Now, Lemma~\ref{lem:consensus_invariance} is used to express $\dtk{j} \Lc_k$ in terms of $B_{k,j}$ and in particular in the translated states 
$$ \dtk{j} \Lc_k(\xi_k,t)=B_{k,j}(\xi_k-\1 a_k(t),\dots,\xi_k^{(j)}(t)-\1 a_{k+j}(t),t). $$
Applying the triangle inequality, bounding $\|B_{k,j}\|$ with $\alpha_{k,j}$, and concluding by taking the limit shows that
$$\lim_{t\to \infty}\|\xi_{k+1}^{(j)}-\1 a_{j+k+1}\|=0.$$
This proves the induction step. Since this also shows that $\|\xi_{k}-\1 a_k(t)\| \to 0$, we can conclude that the states will achieve consensus.
\end{proof}

\subsection{Proof of Lemma~\ref{lem:statespaceconsensus}}\label{app:lemstatespaceconsensus}
\begin{proof}
    The existence and uniqueness of $\xi_n$ is part of the lemma premise. For the remaining states it is simple to verify that Assumption~\ref{ass:kislipschitz} implies that \eqref{eq:statespace} is globally Lipschitz in $\xi_k$ and piecewise continuous in $t$. Existence and uniqueness follow from a standard application of Carathéodory's existence and uniqueness theorem.

    Through induction, we'll prove that consensus will be reached, that is $\|\xi_k-\1a_k(t)\|\to 0$ for some functions $a_k(\cdot)$. The base case with $\|\xi_n-\1 a_n(t)\|\to 0$ follows from our assumption. Suppose it is true for all $\xi_k$ where $k\geq p+1$. The solution for $\xi_p$ satisfy
$$\dot{\xi}_p(t)=-\Lc_p(\xi_p(t),t)+\xi_{p+1}(t)$$
Now, subtracting the asymptotic solution of $\xi_{p+1}$ from both sides and using the fundamental theorem of calculus results in
\begin{multline*}
    \dtk{}\left(\xi_p(t)-\1\int_0^t a_{p+1}(\tau)\mathrm{d}\tau\right)\\=-\Lc_p\left(\xi_p(t)-\1\int_0^t a_{p+1}(\tau)\mathrm{d}\tau,t\right)+\xi_{p+1}-\1 a_{p+1}(t),
    \end{multline*}
where the invariance of $\Lc_p$ through Assumption~\ref{ass:kisinvariant} was also used. Let $z_p(t)=\xi_p(t)-\1\int_0^t a_{p+1}(\tau)\mathrm{d}\tau$ and $w_p(t)=\xi_{p+1}-\1 a_{p+1}(t)$. Then  $z_p$ satisfies
$$\dot{z}_p=-\Lc_p(z_p,t)+w_p(t)$$
where $\|w_p(t)\|\to 0$, which allows us to apply Assumption~\ref{ass:kislocallyISS}. In particular, there is a time $T_p$ such that $\|w_p(t)\|\leq M_p$, where this system is ISS for some seminorm $\vertiii{\cdot}$. To assert that $\lim_{t\to \infty} \|z_p(t)-\1 b_p(t)\|=0$ we can use the $\epsilon$ and $T$ definition for the limit. For any $\epsilon>0$, choose $T_p'$ such that $\|w_p(t)\|<\gamma_k^{-1}(\epsilon/2)$ for all $t> T_p'$. Now, using the ISS property starting at $T_p'$, we get
$$\vertiii{z_p(t)}< \beta_p(\vertiii{z_p(T_p')},t)+\frac{\epsilon}{2}$$
By the definition of $\beta_p$ it's possible to choose a time $T\geq T_p'$ such that $\beta_p(\vertiii{z_p(T_p')},t)<\epsilon/2$.
This proves that the seminorm converges and in particular that there exists a $b_p(t)$ such that $\|z_p(t)-\1 b_p(t)\|\to 0\implies \|\xi_p(t)-\1\left(b_p(t)+\int_0^t a_{p+1}(\tau)\mathrm{d}\tau\right)\|\to 0$, letting $a_p=b_p(t)+\int_0^t a_{p+1}(\tau)\mathrm{d}\tau$ concludes the induction step.
\end{proof}

\subsection{Proof of Proposition~\ref{prop:saturation}}\label{app:proof_saturation}
\begin{proof}
    We begin by proving the result for a leader-follower network. In this case, the dynamics can be rewritten as
    \[
        L\dot{z} = -L\,\mathrm{sat}(z) + L d(t).
    \]
    This representation follows from left-multiplying the system by the Laplacian \( L \). While \( L \) is singular, we are only interested in the evolution of the disagreement vector \( e = Lz \), which remains orthogonal to the consensus subspace.

    A leader-follower network has a unique leader agent whose state remains unaffected by the others. For this agent, \( [Lz]_i = 0 \) for all time. Our goal is to show that, under sufficiently small disturbances, all other agents enter and remain in the linear regime, i.e., \( |[Lz]_i(t)| < 1 \) for all \( t \geq T_i \).

    We proceed by induction along a simple directed path of influence from the leader to any follower. Let the path consist of \( m+1 \) agents labeled \( p_0, p_1, \ldots, p_m \), with \( p_0 \) being the leader.

Base case: The leader agent satisfies \( e_{p_0}(t) = 0 \) for all \( t \), so it is trivially in the linear regime.

Inductive step: Suppose that agent \( p_k \) satisfies \( |e_{p_k}(t)| < 1 \) for all \( t \geq T_k \). We aim to show that agent \( p_{k+1} \) enters the linear regime in finite time \( T_{k+1} \).

The dynamics of agent \( p_{k+1} \) are given by
\[
\dot{e}_{p_{k+1}} = d_{p_{k+1}} - \sum_{j \in \mathcal{N}_{p_{k+1}}} w_{p_{k+1},j} \left(\mathrm{sat}(e_{p_{k+1}}) - \mathrm{sat}(e_j) \right).
\]
Since agent \( p_k \) is in the linear regime after time \( T_k \), we have \( \mathrm{sat}(e_{p_k}) = e_{p_k} \) for \( t \geq T_k \). Without loss of generality, assume \( e_{p_{k+1}}(T_k) > 0 \) (the argument is symmetric for the negative case). We upper-bound the dynamics as
\[
\dot{e}_{p_{k+1}} \leq |d_{p_{k+1}}| + w_{p_{k+1},p_k} |e_{p_k}| - w\,\mathrm{sat}(e_{p_{k+1}}) + w - w_{p_{k+1},p_k},
\]
where \( w = \sum_j w_{p_{k+1},j} \) is the total weight of incoming edges to agent \( p_{k+1} \).

Now, if the disturbance is sufficiently small so that
\[
|d_{p_{k+1}}| < w_{p_{k+1},p_k}(1 - |e_{p_k}|),
\]
then the right-hand side of the inequality becomes negative whenever \( e_{p_{k+1}} \geq 1 \), implying that the agent must enter the region \( |e_{p_{k+1}}| < 1 \) in finite time \( T_{k+1}' \).

After entering the linear regime, the dynamics simplify, and the state can be upper-bounded by a linear system with equilibrium state
\[
e_{p_{k+1}}^{*} = \frac{|d_{p_{k+1}}| + w_{p_{k+1},p_k} |e_{p_k}| + w - w_{p_{k+1},p_k}}{w},
\]
which can be made strictly less than 1 by choosing \( d_{p_{k+1}} \) sufficiently small. The state will converge towards this bound and reach any point between this and $1$ in finite time \( T_{k+1} > T_{k+1}' \), and then remain there for all future time. If the agent started below this steady-state bound, the same conclusion holds with \( T_{k+1} = T_k \).
This completes the inductive step.

Since all agents are connected by a finite directed path originating from the leader, each agent enters the linear regime in finite time. Once all agents lie within the linear region, the local ISS result from Proposition~\ref{prop:serial} can be applied to show that the disturbance \( d(t) \) has a bounded effect on \( \|Lz(t)\|_\infty \).

    To prove the general case, it suffices to show that the agents within the unique strongly connected component (SCC) of the graph underlying $L$ enter and remain in the linear regime for all $t\geq T$ provided the input $d(t)$ is sufficiently small. By definition, the agents in this component evolve independently of the remaining agents. 

    Without loss of generality, consider the subgraph corresponding to the SCC, with Laplacian $\tilde{L}\in \R^{K\times K}$, where $K\geq 2$. Since the subgraph is strongly connected, its zero eigenvalue is simple, and the corresponding left Perron eigenvector $w$ can be chosen to have strictly positive entries.

    Define the diagonal matrix $W=\mathrm{diag}(w)$. Then, the matrix $\tilde{L}^\top W$ satisfies $\tilde{L}^\top W\1=0,$ i.e., it is a graph Laplacian of a strongly connected graph.

    Consider the Lyapunov function
    $$V(t)=\dfrac{\tilde z^\top \tilde{L}^\top W \tilde z}{2}$$
    This function is non-negative and satisfies $V(t)=0 \iff \tilde z\in \mathrm{span}(\1)$, i.e., consensus. 

    Define $\tilde e=\tilde L \tilde z.$ Then, the time derivative of $V$ along trajectories of the system is 
    $$\dot{V}=-\tilde{z}^\top \tilde{L}^\top W \mathrm{sat}(\tilde{L}z)+\tilde{z}^\top \tilde{L}^\top \tilde{d}(t).$$
    Expanding this, and defining $\tilde e=\tilde L \tilde{z} $, we get
    $$\dot{V}(t)=\frac{\dot{V}(t)}{2}=-\sum_{i=1}^K w_i|\tilde{e}_i|(|\mathrm{sat}(e_i)|-\tilde{d}_i).$$
    Now we seek to ensure that $\dot{V}(t)\leq -\epsilon$ for some $\epsilon>0$ whenever $\|\tilde{e}\|_\infty\geq r,$ for some $0<r<1$. To that end, note that the above can be conservatively upper bounded as
    $$\dot{V}\leq -w_\mathrm{min}r(r-d_\mathrm{max})+ Kw_\mathrm{max}d^2_\mathrm{max},$$
    where $w_\mathrm{min}=\min_i w_i$ and $w_\mathrm{max}=\max_i w_i$. A $d_\mathrm{max}$ that ensures that this upper bound is smaller than $-\epsilon$ can be found as long as $w_\mathrm{min}r^2 -\epsilon>0$. Hence, all agents in the SCC enter the region $\|\tilde e\|_\infty<r$ in finite time and remain there for all future time.

    Finally, since the remaining agents are influenced by at least one agent in the SCC, the same inductive argument from the leader-follower case (applied to the directed influence paths originating from the SCC) shows that all agents eventually enter and remain in the linear regime, completing the proof.
\end{proof}

\subsection{Time-delayed consensus protocols}\label{app:counterexample}
Here we will illustrate the consequence of the delayed consensus protocol $\Lc_1(z,t)=Dz-\mc{W}(z,t)$ not satisfying Assumption~\ref{ass:kisinvariant}. For simplicity, consider a two-agent system, where one is a leader and both have a constant and equal input $w_0$. The dynamics are then
\begin{align*}
    \dot{z}_0&=w_0\\
    \dot{z}_1&=-z_1(t)+z_0(t-\tau(t))+w_0
\end{align*}
This system can be explicitly solved for $z_0$ and has the solution $z_0(t)=at+z_0(0)$. For the second, consider the case where $\tau(t)=t$ for $t\leq \tau_\mathrm{max}$. Then the solution for $t\leq \tau_\mathrm{max}$ is $z_1(t)=e^{-t}z_1(0)+(a+z_0(0))(1-e^{-t})$. Now, provided that the system is initiated at consensus,  that is $z_1(0)=z_0(0)$, then we see that the agents drift away from each other as $z_0(t)-z_1(t)=a(t-1+e^{-t})$. This shows that the consensus is not an equilibrium solution of this system. Therefore, we cannot expect the agents to reach a consensus when using a delayed consensus protocol for anything other than $\Lc_n$ in the compositional consensus~\eqref{eq:statespace}.

\bibliographystyle{IEEETran}
\bibliography{references}
\vspace{1mm}

\end{document}